\documentclass[submission,copyright,creativecommons]{eptcs}
\usepackage{breakurl}             
\usepackage{underscore}           

\usepackage{amsthm}
\usepackage{amssymb}

\usepackage{tikz,subfigure}

\newcommand{\IMPL}{\rightarrow}

\newcommand{\AND}{\wedge}

\newcommand{\OR}{\vee}

\newcommand{\NOT}{\neg}

\newtheorem{theorem}{Theorem}
\newtheorem{lemma}{Lemma}
\newtheorem{definition}{Definition}

\newtheorem{proposition}{Proposition}

\title{Routley Star in Information-Based Semantics\thanks{This paper is an outcome of the project Logical Structure of Information Channels, no. 21-23610M, supported by the Czech Science Foundation and realized at the Institute of Philosophy of the Czech Academy of Sciences.}}
\author{V\'it Pun\v coch\'a\v r
\institute{Institute of Philosophy, \\Czech Academy of Sciences, \\The Czech Republic}
\email{puncochar@flu.cas.cz}
\and
Igor Sedl\' ar
\institute{Institute of Philosophy, \\Czech Academy of Sciences, \\The Czech Republic}
\email{sedlar@flu.cas.cz}
}

\begin{document}
\maketitle

\begin{abstract}
It is common in various non-classical logics, especially in relevant logics, to characterize negation semantically via the operation known as Routley star. This operation works well within relational semantic frameworks based on prime theories. We study this operation in the context of ``information-based'' semantics for which it is characteristic that sets of formulas supported by individual information states are theories that do not have to be prime. We will show that, somewhat surprisingly, the incorporation of Routley star into the information-based semantics does not lead to a collapse or a trivialization of the whole semantic system. On the contrary, it leads to a technically elegant though quite restricted semantic framework that determines a particular logic. We study some basic properties of this semantics. For example, we show that within this framework double negation law is valid only in involutive linear frames. We characterize axiomatically the logic of all linear frames and show that the logic of involutive linear frames coincides with a system that Mike Dunn coined Kalman logic. This logic is the $\{\AND, \OR, \NOT\}$-fragment of the ``semi-relevant'' logic known as $\mathsf{R}$-mingle. Finally, we characterize by a deductive system the logic of all information frames equipped with Routley star.
\end{abstract}

\section{Introduction}

This paper is a contribution to the general study of the operation known as Routley star, standardly denoted by the symbol $*$ (see \cite{Dunn93, Dunn99, Leitgeb19, Restall99}). Routley star is used in non-classical, especially relevant logics for the following semantic characterization of negation: $w \vDash \NOT \alpha$\ iff $w^* \nvDash \alpha$. This operation works so well because under some basic assumptions there is a corresponding operation on prime theories, mapping a prime theory $\Omega$ to the prime theory $\Omega^*=\{\alpha \mid \NOT \alpha \notin \Omega \}$ so that one obtains the analogue of the semantic clause  $\NOT \alpha  \in \Omega$ iff $\alpha \notin \Omega^*$. This fact is used in the canonical model constructions and completeness proofs for logics equipped with semantics using Routley star. Since primeness of the theories used in canonical models is crucial for this strategy it seems that Routley star can work well only within relational  semantic frameworks in which sets of formulas that are true at individual points of evaluation are always prime theories. In this paper we study Routley star in the context of a semantics that is based on theories rather than prime theories. We show that the incorporation of Routley star in this sematnics significantly restricts its flexibility but it does not lead to its collapse or a trivialization. On the contrary, it leads to a non-trivial semantics that determines a particular interesting logic.

We call our semantic framework \textit{information-based sematnics}. It has been used in the context of inquisitive semantics (see \cite{CiardelliRoelofsen11, Puncochar19, PuncocharSedlar21}).  Similar semantic frameworks were employed for example in \cite{Dosen89, Humberstone88, Wansing93}. The main idea behind this semantics is that there is a crucial conceptual difference between the notion of truth and the notion of informational support. Both can be viewed as relations between some specific entities on one side and formulas on the other side. In the case of truth the specific relatum is some ontic entity, a possible world or a situation. We will use the expression $w \vDash \alpha$ for the claim that the sentence $\alpha$ is true in a world (or situation) $w$. In the case of informational support the specific relatum is a body of information or an information state  and we will write $s \Vdash \alpha$ expressing that $\alpha$ is supported by the information state $s$. Truth and informational support seem to be often confused but arguably they are different notions.

The clearest way to show this is to compare the natural truth and support conditions for disjunction. It makes a perfect sense to claim that a disjunction $\alpha \OR \beta$ is true in a world (or situation) $w$ if and only if $\alpha$ or $\beta$\ is true in $w$. However, informational support of disjunction cannot be characterized by an analogous condition. It is quite common that a body of information supports a disjunction without supporting any of its disjuncts. Technically speaking, this means that the sets of formulas supported by individual information states are theories that do not have to be prime. Our aim here is to study Routley star negation in the context of such information-based semantics (a different treatment of negation in information-based semantics, characterized in terms of an incompatibility relation, was studied in \cite{PuncocharTedder21}). Our main result is a syntactic characterization of the logic of all information frames equipped with Routley star.

The article is structured as follows. In Section \ref{sec:rsstandard} we recall the Routley star in the context of the standard framework where points of evaluation correspond to prime theories. Section \ref{sec:infosem} introduces information-based semantics with the Routley star and proves a series of characterization results; in particular, it is shown that the double negation law forces linearity of frames. Section \ref{sec:linframes} studies linear frames. In particular, the axiomatizations of the logic of all linear frames and all involutive linear frames, respectively, are provided. The latter logic coincides with what Dunn called Kalman logic. Section \ref{sec:logicofallRIF} establishes a completeness result for the logic of all information frames with the Routley star.

\section{Routley star in the standard framework}\label{sec:rsstandard}

Let $L$ be the language built up from a set of atomic formulas $At$ using conjunction $\AND$, disjunction $\OR$ and negation $\NOT$. We will now reconstruct the standard semantic framework used for this language in non-classical logics. For more details, see \cite{Dunn93, Dunn99}.

 A standard frame (SF, for short) is a tuple $\mathcal{F}=\langle W, \leq, * \rangle$ where $\langle W, \leq \rangle$ is a partially ordered set and $*$ is a unary operation on $W$ such that $v \leq w$ implies $w^* \leq v^*$, for all $v, w \in W$.  A standard model (SM) is a pair $\mathcal{M}=\langle \mathcal{F}, V\rangle$, where $\mathcal{F}$ is an SF and $V$ is a valuation defined as a function assigning to each atomic formula an upward closed set in $\mathcal{F}$. Given such a model $\mathcal{M}$ the following semantic clauses specify recursively a relation of truth $\vDash$  between the elements of $\mathcal{F}$ and formulas of $L$:
\begin{itemize}
\item[] $w \vDash p$ iff $w \in V(p)$, for every $p \in At$,
\item[] $w \vDash \alpha \AND \beta$ iff $w \vDash \alpha$ and $w \vDash \beta$,
\item[] $w \vDash \alpha \OR \beta$ iff $w \vDash \alpha$ or $w \vDash \beta$,
\item[] $w \vDash \NOT \alpha$ iff $w^* \nvDash \alpha$.
\end{itemize}
A crucial feature of the clauses is that they yield persistence, the property required for atomic formulas in the definition of valuation: for any $L$-formula $\alpha$, if $w \vDash \alpha$\ and $w \leq v$ then $v \vDash \alpha$. In other words, the propositions expressed by formulas are always upward closed sets and the algebra of propositions is the algebra of such sets.

A pair of $L$-formulas $\left\langle \alpha, \beta \right\rangle$ will be called a (consequence) $L$-pair. We will write $\alpha \vdash \beta$ instead of $\left\langle \alpha, \beta \right\rangle$. We say that $\alpha \vdash \beta$ is valid in $\mathcal{M}$ if for every $w$\ in $\mathcal{M}$ if $w \vDash \alpha$ then $w \vDash \beta$. We say that $\alpha \vdash \beta$ is valid in a class of standard models if it is valid in every model of that class. 

It is known that the set of $L$-pairs valid in all standard models is generated by the deductive system consisting of the axioms and rules of the distributive lattice logic $\mathsf{DLL}$:

\begin{center}
\begin{tabular}{ccccc}
(ID) $\alpha \vdash \alpha$ & (e$\AND_1$) $\alpha \AND \beta \vdash \alpha$ & (e$\AND_2$) $\alpha \AND \beta \vdash \beta$ & (i$\OR_1$) $\alpha \vdash \alpha \OR \beta$ & (i$\OR_2$) $\beta \vdash \alpha \OR \beta$ \end{tabular}
\end{center}

\begin{center}
\begin{tabular}{c}
(D) $\alpha \AND (\beta \OR \gamma) \vdash (\alpha \AND \beta) \OR (\alpha \AND \gamma)$
\end{tabular}
\end{center}

\begin{center}
\begin{tabular}{ccc}
(T) $\alpha \vdash \beta, \beta \vdash \gamma / \alpha \vdash \gamma$ & (i$\AND$) $\alpha \vdash \beta, \alpha \vdash \gamma / \alpha \vdash \beta \AND \gamma$ & (e$\OR$) $\alpha \vdash \gamma, \beta \vdash \gamma / \alpha \OR \beta \vdash \gamma$ \\
\end{tabular}
\end{center}

plus the following axioms and rule characterizing the behaviour of negation:

\begin{center}
\begin{tabular}{cc}
(DM1) $\NOT (\alpha \AND \beta) \vdash \NOT \alpha \OR \NOT \beta$ & (DM2) $\NOT \alpha \AND \NOT \beta \vdash \NOT (\alpha \OR \beta)$  \\
\end{tabular}
\end{center}

\begin{center}
\begin{tabular}{c}
(N) $\alpha \vdash \beta / \NOT \beta \vdash \NOT \alpha$ \\
\end{tabular}
\end{center}

We can denote this system, or more precisely, the set of $L$-pairs generated by this system, as \textit{Distributive Lattice Logic with Routley Star Negation}, or $\textsf{DLLR}$, for short.

The elegance of the treatment of negation via the Routley star is clear form the completeness proof which is based on the standard canonical model construction. The canonical model is build out of prime theories. A set of $L$-formulas $\Omega$ is a prime $\textsf{DLLR}$-theory if it satisfies the following three conditions: (a)~if $\alpha \in \Omega$ and $\alpha \vdash \beta \in \textsf{DLLR}$ then $\beta \in \Omega$; (b)  if $\alpha \in \Omega$ and $\beta \in \Omega$ then $\alpha \AND \beta \in \Omega$; (c) if $\alpha \OR \beta \in \Omega$ then $\alpha \in \Omega$ or $\beta \in \Omega$. Note that the set of all formulas that are true at a given $w$ is always a prime $\textsf{DLLR}$-theory.

Now it is crucial that there is an operation on prime $\textsf{DLLR}$-theories that corresponds to the Routley star operation. In particular, for any prime $\textsf{DLLR}$-theory $\Omega$ let us define the set of formulas $\Omega^*=\{ \beta~|~\NOT \beta \notin \Omega \}$. One can easily observe that $\Omega^*$ is again a prime $\textsf{DLLR}$-theory (to show this the axioms (DM1) and (DM2) and the rule (N) are needed). Moreover, it follows directly from the definition of $*$ that if $\Omega$, $\Gamma$ are prime $\textsf{DLLR}$-theories such that $\Omega \subseteq \Gamma$, then $\Gamma^* \subseteq \Omega^*$. So, one can take the set of prime $\textsf{DLLR}$-theories ordered by inclusion and equipped with $*$ and the resulting structure is indeed a standard frame. Moreover, if valuation is defined in the expected way ($\Omega \in V(p)$ iff $p \in \Omega$) one can prove by induction the truth lemma: $\Omega \vDash \alpha$\ iff $\alpha \in \Omega$, from which completeness follows by a Lindenbaum-style extension lemma.

If the standard frames are required to satisfy generally the property $w=w^{**}$ then the resulting logic is axiomatized by the system above enriched with the double negation laws (DN1) $\alpha \vdash \NOT \NOT \alpha$ and (DN2)  $\NOT \NOT \alpha \vdash \alpha$. Classical logic is then obtained semantically by requiring  $w=w^{*}$ (from which it follows that $\leq$\ is reduced to the identity relation) and syntactically by enriching the logic with the law of excluded middle (EM) $\beta \vdash \alpha \OR \NOT \alpha$ and the principle of explosion (EFQ) $\alpha \AND \NOT \alpha \vdash \beta$.



\section{Information-based semantics}\label{sec:infosem}

The standard semantics presented in the previous section was based on truth conditions as is clear from the semantic clause for disjunction. We now want to move to a semantics in which the central notion is informational support rather than truth. So, we must lift the assumption that a disjunction holds at a point of evaluation only if some of its disjuncts hold at this point. In other words, we must lift the assumption that sets of formulas that hold at particular points of evaluation are always prime theories. This undermines the use of Routley star. The goal of this paper is to explore whether an information-based semantics with the Routley star treatment of negation completely breaks down or whether it leads to a non-trivial framework. Somewhat surprisingly, it will turn out that the resulting framework leads to an interesting though rather restricted semantics. Let us introduce the frames of information-based semantics.

\begin{definition}
A Routley information proto-frame (or just Routley proto-IF) is a structure $\mathcal{I}=\linebreak\left\langle S, \circ, *,  i, e\right\rangle$, where $\left\langle S, \circ \right\rangle$ is a semilattice, i.e. $S$ is a non-empty set and $\circ$\ is an associative, commutative and idempotent binary operation on $S$; moreover, $*: S \IMPL S$ and $i, e$ are two distinct elements of $S$ for which the following conditions are required: (a) $s \circ i = s$, for any $s \in S$, (b) $s \circ e = e$, for any $s \in S$, and (c) for any $t, u \in S$, if $e = t \circ u$ then $e=t$ or $e=u$.
\end{definition}

We define an order on $\mathcal{I}$ in the following way: $s \leq t$ iff $s \circ t = s$. It follows that $e$ is the least element, and $i$ is the top element of the structure.

The set $S$\ represents a space of information states. The operation $\circ$ assigns to any two states $t, u$ the common  (informational) content of $t$\ and $u$, i.e. a state $t \circ u$ consisting of information that the two states have in common. The element $i$\ represents the state of absolute inconsistency, and the state $e$ the state of absolute ignorance. That is, $i$ is intended to satisfy every formula, and $e$ no formula. The informal claim that $t \circ u$\ is the common informational content of $t$ and $u$ translates into the formal requirement that for any formula $\alpha$, $t \circ u$ satisfies $\alpha$\ iff both $t$\ and $u$ satisfy $\alpha$. In other words, the set of states satisfying a formula should form a (proper) filter in $\mathcal{I}$. This determines a restriction on valuations.

 \begin{definition}
Let $\mathcal{I}=\left\langle S, \circ, *,  i, e\right\rangle$ be a Routley proto-IF. A proper filter in $\mathcal{I}$ is any subset $F$ of $S$\ such that $i \in F$, $e \notin F$, and for any $t, u \in S$ it holds that $t \circ u \in F$ iff $t \in F$ and $u \in F$.  A valuation in $\mathcal{I}$ is a function that assigns to each atomic formula from $At$ a proper filter in $\mathcal{I}$. A Routley information proto-model (or just Routley proto-IM) is a Routley proto-IF equipped with a valuation.
\end{definition}

Note that one can equivalently define filters as non-empty proper subsets of $S$ that are upward closed (w.r.t. $\leq$) and closed under $\circ$. Relative to a given Routley proto-IM, we can introduce the following semantic clauses for $L$-formulas (the non-standard semantic clause for disjunction captures formally the idea that disjunction is what two states supporting the disjuncts have in common):

\begin{itemize}
\item[] $s \Vdash p$\ iff $s \in V(p)$, for any atomic formula $p$,
\item[] $s \Vdash \alpha \AND \beta$\ iff $s \Vdash \alpha$ and $s \Vdash \beta$,
\item[] $s \Vdash \alpha \OR \beta$\ iff $t \circ u \leq s$ for some states $t, u$ such that $t \Vdash \alpha$ and $u \Vdash \beta$,
\item[] $s \Vdash \NOT \alpha$\ iff $s^* \nVdash \alpha$.
\end{itemize}
If $s \Vdash \alpha$, we say that the state $s$\ supports the formula $\alpha$. Let $\mathcal{N}$ be a proto-IM, and $\alpha$ an $L$-formula.  Then $|| \alpha ||_{\mathcal{N}}$ is the set of states in $\mathcal{N}$\ that support $\alpha$ (the subscript will usually be omitted). We will call it the proposition expressed by $\alpha$ in $\mathcal{N}$. 

We want to have  that propositions expressed by $L$-formulas are always proper filters, which corresponds to the claim that the set of formulas supported by a state $s \circ t$, i.e. by the common content of $s$\ and $t$, is the intersection of the set of formulas supported by $s$ and the set of formulas supported by $t$. For atomic formulas this property is guaranteed directly by the definition of a valuation. The semantic clauses for conjunction and disjunction form proper filters from proper filters. However, the semantic clause for negation does not generally preserve this property. We will restrict ourself to the largest class of Routley proto-IFs where this property is preserved. This class is described in the following proposition.

\begin{proposition}\label{p:charoffilters}
Let $\mathcal{I}$ be a Routley proto-IF. Then $|| \alpha ||$ is a proper filter in $\left\langle \mathcal{I}, V \right\rangle$, for every valuation $V$ and every  $L$-formula $\alpha$, iff the following conditions are satisfied:
\begin{itemize}
\item[(a)] $i^* = e$ and $e^* = i$,
\item[(b)] for any $t, u \in S$, if $t \leq u$ then $u^* \leq t^*$,
\item[(c)] for any $t, u \in S$, $(t \circ u)^*\leq t^*$ or $(t \circ u)^* \leq u^*$. 
\end{itemize}
\end{proposition}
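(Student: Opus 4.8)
The statement is a biconditional, and I would prove the two implications separately; throughout I will use the facts, already noted in the text, that $\circ$ is monotone (i.e. $s \leq s'$ and $t \leq t'$ imply $s \circ t \leq s' \circ t'$), that $e$ is least and $i$ is greatest, and that for every $s \in S$ with $s \neq e$ the principal set ${\uparrow}s = \{t \mid s \leq t\}$ is a proper filter, as is $\{i\}$.

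\textbf{Sufficiency of (a)--(c).} Assuming (a), (b) and (c), fix a valuation $V$ and argue by induction on the complexity of $\alpha$ that $||\alpha||$ is a proper filter. The atomic case is immediate from the definition of a valuation. For conjunction, $||\alpha \AND \beta|| = ||\alpha|| \cap ||\beta||$ and the intersection of two proper filters is easily seen to be a proper filter. For disjunction, using monotonicity of $\circ$ and the induction hypothesis one checks that $||\alpha \OR \beta||$ is upward closed, contains $i$, and satisfies $t \circ u \in ||\alpha \OR \beta||$ iff $t,u \in ||\alpha \OR \beta||$; the one delicate point is $e \notin ||\alpha \OR \beta||$, and this is exactly where clause (c) of the definition of a proto-IF (``primeness of $e$'') enters: if $t \circ u \leq e$ with $t \Vdash \alpha$ and $u \Vdash \beta$, then $t \circ u = e$, hence $t = e$ or $u = e$, contradicting the induction hypothesis that $||\alpha||$ and $||\beta||$ are proper. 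For negation I would verify the four filter conditions for $||\NOT \alpha|| = \{s \mid s^* \notin ||\alpha||\}$, given that $||\alpha||$ is a proper filter: membership of $i$ and non-membership of $e$ follow from (a) (as $i^* = e \notin ||\alpha||$ and $e^* = i \in ||\alpha||$); the implication ``$t \circ u \in ||\NOT \alpha||$ $\Rightarrow$ $t,u \in ||\NOT \alpha||$'' follows from (b), since $t \circ u \leq t$ gives $t^* \leq (t\circ u)^*$ and $||\alpha||$ is upward closed; and the converse implication follows from (c), since $(t \circ u)^* \leq t^*$ or $(t\circ u)^* \leq u^*$ and $||\alpha||$ is upward closed.

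\textbf{Necessity of (a)--(c).} For the converse I would argue contrapositively, and it suffices to use a single atom $p$, the formula $\NOT p$, and valuations assigning to $p$ a principal filter. For (a): since $||\NOT p||$ must be a proper filter for every $V$, no proper filter can contain $i^*$ (else $V(p) = F$ with $i^* \in F$ yields $i \notin ||\NOT p||$), and since ${\uparrow}i^*$ is a proper filter whenever $i^* \neq e$, we get $i^* = e$; dually $e^*$ must lie in every proper filter (else $e \in ||\NOT p||$ for a suitable $V$), and $\{i\}$ is a proper filter, so $e^* = i$. For (b): if $t \leq u$ but $u^* \not\leq t^*$ then in particular $u^* \neq e$, so $F = {\uparrow}u^*$ is a proper filter with $u^* \in F$ and $t^* \notin F$; taking $V(p) = F$ gives $t \in ||\NOT p||$ and $u \notin ||\NOT p||$, contradicting upward closure of $||\NOT p||$. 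For (c): if $(t\circ u)^* \not\leq t^*$ and $(t\circ u)^* \not\leq u^*$ then $(t\circ u)^* \neq e$, so $F = {\uparrow}(t\circ u)^*$ is a proper filter omitting $t^*$ and $u^*$; with $V(p) = F$ both $t$ and $u$ lie in $||\NOT p||$ but $t \circ u$ does not, contradicting closure of $||\NOT p||$ under $\circ$.

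I expect the argument to be largely routine bookkeeping with filters. The points that will require care are, first, the disjunction step of the sufficiency direction, where the primeness of $e$ built into the proto-IF definition is precisely what prevents $e$ from supporting a disjunction; and second, in the necessity direction, recognising that principal filters suffice to force all three conditions, together with handling the degenerate cases in which a relevant $*$-image equals $e$ (which make the desired inequality hold trivially and so cause no trouble).
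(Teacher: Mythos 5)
Your proof is correct and takes essentially the same approach as the paper: sufficiency comes down to checking that negation maps proper filters to proper filters using (a)--(c) (the conjunction and disjunction steps you spell out are the ones the paper states without proof just before the proposition), and necessity uses exactly the same counterexample valuations, namely principal filters and $\{i\}$. Your explicit treatment of the degenerate cases where a $*$-image equals $e$ is slightly more careful than the paper's, and you are right that the argument for (c) needs no appeal to (b), so the paper's extra remark there is dispensable.
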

\begin{proof}
Assume that the conditions (a)-(c) are satisfied. We need to show that for any $L$-formula $\alpha$, if $|| \alpha ||$ is a proper filter, then $|| \NOT \alpha ||$ is a proper filter. Assume that $|| \alpha ||$ is a proper filter. Since $i \vDash \alpha$ and $e^* = i$, we obtain $e \notin || \NOT \alpha ||$. Since $e \nvDash \alpha$\ and $i^* = e$, we obtain $i \in || \NOT \alpha ||$.  Now, we show that  $|| \NOT \alpha ||$ is closed under $\leq$ and $\circ$. First, assume that $t \vDash \NOT \alpha$ and $t \leq u$. Then $u^* \leq t^*$, and $t^* \nvDash \alpha$. So, $u^* \nvDash \alpha$, i.e. $u \vDash \NOT \alpha$. Second, assume that $t  \vDash \NOT \alpha$ and $u  \vDash \NOT \alpha$, i.e. $t^* \nvDash \alpha$ and $u^* \nvDash \alpha$. Then, due to (c), we obtain $(t \circ u)^* \nvDash \alpha$, i.e. $t \circ u \vDash \NOT \alpha$.

Now, we will assume that some of the conditions (a)-(c) is not satisfied and we will show that we can define a valuation such that the formula $\NOT p$ will not semantically correspond to a filter. 

Assume that $i^* \neq e$ and take a valuation $V$\ such that $V(p)=\{ s~|~i^* \leq s \}$. The assumption $i^* \neq e$ implies that $\{ s~|~i^* \leq s \}$ is a proper filter and thus such a valuation exists. Then $i^* \vDash p$\ and so\ $i \nVdash \NOT p$. Now assume that  $e^* \neq i$  and take a valuation $V$\ such that $V(p)=\{ i \}$. Then $e^* \nvDash p$\ and thus $e \vDash \NOT p$. 

Next assume that (b) is not satisfied. So, there are $t, u \in S$ such that $t \leq u$ but not $u^* \leq t^*$. Take a valuation $V$\ such that $V(p)=\{ s~|~u^* \leq s \}$. Then $u^* \vDash p$, but $t^* \nvDash p$, i.e. $u \nvDash \NOT p$, but $t \vDash  \NOT p$. 

Finally, assume that (c) is not satisfied. We can also assume that (b) is satisfied (if not, we can proceed as in the previous paragraph). Since (c) is not satisfied, there are $t, u \in S$ such that neither $(t \circ u)^* \leq t^*$ nor $(t \circ u)^* \leq u^*$. Take a valuation $V$\ such that $V(p)=\{ s~|~(t \circ u)^* \leq s \}$. Then $(t \circ u)^* \vDash p$ but $t^* \nvDash p$\ and $u^*\nvDash p$. So, $t \vDash \NOT p$, $u \vDash \NOT p$ but $t \circ u \nvDash \NOT p$.
\end{proof}
In the light of the previous proposition it is desirable to introduce the following definition. 
\begin{definition}
Any Routley proto-IF (proto-IM) that satisfies the conditions (a)-(c) from Proposition \ref{p:charoffilters} will be called a Routley IF (IM). 
\end{definition}

Note that due to Proposition \ref{p:charoffilters}, we obtain that the state $i$ in any Routley IM supports every formula, and the state $e$ supports no formula. This also explains the role of these two elements in the semantics. The presence of $i$ guarantees that if a state $s$ supports $\alpha$ then $s$\ supports also $\alpha \OR \beta$ because $i$\ supports $\beta$ and $s = s \circ i$. The empty state $e$ is needed as the $*$-image of the state\ $i$.

In general disjunction can be supported by a state even if none of the disjuncts is. But the properties of Routley IFs lead to an asymmetry between positive formulas and their negations.  It follows from the definition of Routley IFs that disjunctions of negations can be characterized by the truth-conditional semantic clause.

\begin{proposition}\label{p:disjunctionofnegations}
For any Routley IM, any state $s$\ in that model, and all $L$-formulas $\alpha, \beta$:
\begin{itemize}
\item[] $s \Vdash \NOT \alpha \OR \NOT \beta$ iff $s \Vdash \NOT \alpha$ or $s \Vdash \NOT \beta$.
\end{itemize}
\end{proposition}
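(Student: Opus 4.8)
The plan is to prove the nontrivial direction (left to right) by unpacking the disjunction clause and exploiting conditions (b) and (c) of Routley IFs, and to dispatch the converse with a one-line disjunction-introduction argument.

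\textbf{Converse direction.} Suppose $s \Vdash \NOT \alpha$. Since $s = s \circ i$ and, as remarked after Proposition~\ref{p:charoffilters}, the state $i$ supports every formula (in particular $i \Vdash \NOT \beta$), the witnesses $t := s$, $u := i$ satisfy $t \circ u = s \leq s$, $t \Vdash \NOT\alpha$ and $u \Vdash \NOT\beta$; hence $s \Vdash \NOT\alpha \OR \NOT\beta$. The case $s \Vdash \NOT\beta$ is symmetric. So the real content is in the forward implication.

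\textbf{Forward direction.} Assume $s \Vdash \NOT\alpha \OR \NOT\beta$. By the semantic clause for disjunction there are states $t,u$ with $t \circ u \leq s$, $t \Vdash \NOT\alpha$ and $u \Vdash \NOT\beta$; that is, $t^* \nVdash \alpha$ and $u^* \nVdash \beta$. Applying condition (b) to $t \circ u \leq s$ gives $s^* \leq (t \circ u)^*$, and condition (c) gives a case split: either $(t \circ u)^* \leq t^*$ or $(t \circ u)^* \leq u^*$. In the first case $s^* \leq (t \circ u)^* \leq t^*$; since we are in a Routley IM, Proposition~\ref{p:charoffilters} tells us $||\alpha||$ is a proper filter, hence upward closed, so from $t^* \nVdash \alpha$ we get $s^* \nVdash \alpha$, i.e.\ $s \Vdash \NOT\alpha$. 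The second case is entirely analogous and yields $s \Vdash \NOT\beta$. Either way, $s \Vdash \NOT\alpha$ or $s \Vdash \NOT\beta$, as required.

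There is no genuine obstacle here; the only point to watch is the two-fold use of the order~$\leq$ — once directly on states ($t \circ u \leq s$) and once transported through $*$ via condition (b) — together with the appeal to upward closure of propositions. The conceptual heart of the argument is simply that condition (c), which says $*$ sends $t \circ u$ below one of $t^*$, $u^*$, is exactly what makes the truth-conditional reading of disjunction available for negated formulas.
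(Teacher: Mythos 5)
Your proof is correct and follows essentially the same route as the paper's: unpack the disjunction clause, use antitonicity (condition (b)) on $t \circ u \leq s$ together with the case split from condition (c), and conclude via persistence (upward closure of $||\alpha||$); the converse via the witness $i$ is the same easy observation the paper leaves implicit. Nothing to add.
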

\begin{proof}
We need to show only the left-to-right implication. Assume $s \Vdash \NOT \alpha \OR \NOT \beta$. Then there are states $t, u$ such that $t^* \nVdash \alpha$, $u^* \nVdash \beta$ and $t \circ u \leq s$. It follows that $s^* \leq (t \circ u)^*$, and thus $s^* \leq t^*$ or  $s^* \leq u^*$. Hence, $s^*  \nVdash \alpha$ or $s^*  \nVdash \beta$, that is  $s \Vdash \NOT \alpha$ or $s \Vdash \NOT \beta$.
\end{proof}

The condition (c) from the definition of Routley IFs is quite strong and restricting, as shown by the following proposition.

\begin{proposition}\label{p:linearityofstar}
Let $\mathcal{I}$ be a Routley IF. Then it holds:
\begin{itemize}
\item[(a)]  for every $t, u \in S$, $(t \circ u)^*= t^*$ or $(t \circ u)^* = u^*$,
\item[(b)]  for every $t, u \in S$, $t^* \leq u^*$ or $u^* \leq t^*$,
\item[(c)]  for every $t \in S$, $t^* \leq t^{**}$ or $t^{**} \leq t^{*}$.
\end{itemize}
\end{proposition}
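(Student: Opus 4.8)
The plan is to derive all three claims from the two properties of Routley IFs that govern $*$, namely the reversed monotonicity (condition (b) of Proposition \ref{p:charoffilters}) and the ``prime-like'' condition (c) of Proposition \ref{p:charoffilters}, together with the elementary semilattice fact that $t \circ u$ lies below both $t$ and $u$ with respect to $\leq$. Claim (a) is the substantial one; claims (b) and (c) then drop out as immediate specializations.

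For (a), first observe that $(t \circ u) \circ t = (t \circ t) \circ u = t \circ u$, so $t \circ u \leq t$, and symmetrically $t \circ u \leq u$. Applying condition (b) of Proposition \ref{p:charoffilters} to each of these inequalities yields $t^* \leq (t \circ u)^*$ and $u^* \leq (t \circ u)^*$. On the other hand, condition (c) of Proposition \ref{p:charoffilters} gives $(t \circ u)^* \leq t^*$ or $(t \circ u)^* \leq u^*$. In the first case, antisymmetry of $\leq$ yields $(t \circ u)^* = t^*$; in the second, $(t \circ u)^* = u^*$. This proves (a).

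For (b), fix $t, u \in S$ and apply (a) to the pair $t, u$. If $(t \circ u)^* = t^*$, then from the bound $u^* \leq (t \circ u)^*$ established above we get $u^* \leq t^*$; the case $(t \circ u)^* = u^*$ symmetrically gives $t^* \leq u^*$. For (c), apply (b) to the pair $t, t^*$, which yields $t^* \leq (t^*)^* = t^{**}$ or $t^{**} = (t^*)^* \leq t^*$.

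I expect no genuine obstacle here: everything is order-theoretic bookkeeping built on Proposition \ref{p:charoffilters}. The only point that requires a little care is tracking the direction of inequalities, since $*$ reverses $\leq$; in particular it is essential to combine the prime-like bound $(t \circ u)^* \leq t^*$ with the \emph{upward} bound $t^* \leq (t \circ u)^*$ coming from $t \circ u \leq t$, and not the reverse.
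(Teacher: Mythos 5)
Your proof is correct and follows essentially the same route as the paper: derive $t^* \leq (t\circ u)^*$ and $u^* \leq (t\circ u)^*$ from $t\circ u \leq t,u$ via order-reversal, combine with the frame condition $(t\circ u)^*\leq t^*$ or $(t\circ u)^*\leq u^*$ to get (a), then read off (b) and obtain (c) by instantiating (b) at $t, t^*$. No discrepancies to report.
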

\begin{proof}
(a) Since $t \circ u \leq t$\ and $t \circ u \leq u$ we have $t^* \leq (t \circ u)^*$ and $u^* \leq (t \circ u)^*$. Moreover, $(t \circ u)^*\leq t^*$ or $(t \circ u)^* \leq u^*$, and thus $(t \circ u)^*= t^*$ or $(t \circ u)^* = u^*$.

(b) Since $t^* \leq (t \circ u)^*$ and $u^* \leq (t \circ u)^*$, and $(t \circ u)^*= t^*$ or $(t \circ u)^* = u^*$, we have $t^* \leq u^*$ or $u^* \leq t^*$. 

(c) This claim is obtained just by applying (b) to $t$\ and $t^*$.
\end{proof}

We say that an $L$-pair $\alpha \vdash \beta$ is valid in a given Routley IM $\mathcal{N}$ if $|| \alpha || \subseteq || \beta ||$. An $L$-pair is valid in a Routely IF $\mathcal{I}$ if it is valid in every Routely IM based on $\mathcal{I}$. An $L$-pair is valid in a class of Routely IFs (IMs) if it is valid in every member of that class.

\begin{proposition}\label{p:somevaliditiesinifs}
Let $\mathcal{N}$ be a Routley IM. Then it holds for all $L$-formulas $\alpha$, $\beta$:
\begin{itemize}
\item[(a)] The schematic rules (T), (i$\AND$), (e$\OR$), (N) preserve validity in $\mathcal{N}$, i.e. if the premises are valid in $\mathcal{N}$ then the conclusion is valid in $\mathcal{N}$ as well.
\item[(b)] All instances of the axiom schemata (ID), (e$\AND_1$), (e$\AND_2$), (i$\OR_1$), (i$\OR_2$), (DM1) are valid in $\mathcal{N}$.
\item[(c)]  All instances of the axiom schemata (DM2$^*$) $\NOT \NOT \alpha \AND \NOT \NOT \beta \vdash \NOT (\NOT \alpha \OR \NOT \beta)$ and (D$^*$) $\alpha \AND (\NOT \beta \OR \NOT \gamma) \vdash (\alpha \AND \NOT \beta) \OR (\alpha \AND \NOT \gamma)$ are valid in $\mathcal{N}$.
\end{itemize}
\end{proposition}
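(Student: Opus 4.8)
The plan is to handle the three parts separately, observing that almost everything reduces to two facts already available: (i) by Proposition~\ref{p:charoffilters} every $||\alpha||$ is a proper filter, hence upward closed and closed under $\circ$, with $i$ supporting every formula and $e$ supporting none; and (ii) by Proposition~\ref{p:disjunctionofnegations} disjunctions of negations are truth-conditional. Throughout I would use that validity of $\alpha \vdash \beta$ in $\mathcal{N}$ is simply the inclusion $||\alpha|| \subseteq ||\beta||$, and that $||\alpha \AND \beta|| = ||\alpha|| \cap ||\beta||$ by the clause for $\AND$.

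For part (a): (T) and (i$\AND$) are immediate from transitivity of $\subseteq$ and the universal property of intersection. For (e$\OR$), suppose $||\alpha|| \subseteq ||\gamma||$ and $||\beta|| \subseteq ||\gamma||$; if $s \Vdash \alpha \OR \beta$ there are $t \Vdash \alpha$, $u \Vdash \beta$ with $t \circ u \leq s$, and since $t,u \in ||\gamma||$ and $||\gamma||$ is a filter we get $t \circ u \in ||\gamma||$, whence $s \in ||\gamma||$ by upward closure. For (N), I would contrapose the negation clause: if $||\alpha|| \subseteq ||\beta||$ and $s \Vdash \NOT \beta$, i.e.\ $s^* \nVdash \beta$, then $s^* \nVdash \alpha$, i.e.\ $s \Vdash \NOT \alpha$.

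For part (b): (ID), (e$\AND_1$), (e$\AND_2$) follow from $||\alpha \AND \beta|| = ||\alpha|| \cap ||\beta||$. For (i$\OR_1$): if $s \Vdash \alpha$, take $t = s$ and $u = i$; since $i$ supports $\beta$ and $t \circ u = s \circ i = s \leq s$, we get $s \Vdash \alpha \OR \beta$; (i$\OR_2$) is symmetric, using $t = i$. For (DM1), chain equivalences: $s \Vdash \NOT(\alpha \AND \beta)$ iff $s^* \nVdash \alpha \AND \beta$ iff $s^* \nVdash \alpha$ or $s^* \nVdash \beta$ iff $s \Vdash \NOT \alpha$ or $s \Vdash \NOT \beta$, and the last is equivalent to $s \Vdash \NOT \alpha \OR \NOT \beta$ by Proposition~\ref{p:disjunctionofnegations}. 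For part (c), both axioms exploit that the internal disjunction is a disjunction of negations. For (D$^*$): by Proposition~\ref{p:disjunctionofnegations}, $s \Vdash \NOT \beta \OR \NOT \gamma$ iff $s \Vdash \NOT \beta$ or $s \Vdash \NOT \gamma$, so if $s \Vdash \alpha \AND (\NOT \beta \OR \NOT \gamma)$ a case split gives $s \Vdash \alpha \AND \NOT \beta$ or $s \Vdash \alpha \AND \NOT \gamma$, and then the already-validated (i$\OR_1$), (i$\OR_2$) yield $s \Vdash (\alpha \AND \NOT \beta) \OR (\alpha \AND \NOT \gamma)$. For (DM2$^*$): putting $\varphi = \NOT \alpha$, $\psi = \NOT \beta$, the axiom reads $\NOT \varphi \AND \NOT \psi \vdash \NOT(\varphi \OR \psi)$, and since $\varphi \OR \psi$ is a disjunction of negations, Proposition~\ref{p:disjunctionofnegations} applied at $s^*$ gives $s^* \Vdash \varphi \OR \psi$ iff $s^* \Vdash \varphi$ or $s^* \Vdash \psi$; contraposing, $s^* \nVdash \varphi \OR \psi$ iff $s^* \nVdash \varphi$ and $s^* \nVdash \psi$, which is exactly what $s \Vdash \NOT \varphi \AND \NOT \psi$ asserts, so $s \Vdash \NOT(\varphi \OR \psi)$.

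I do not anticipate a real obstacle: the content of the proposition is precisely that, once Proposition~\ref{p:disjunctionofnegations} is in hand, the negative fragment of the semantics is classically well-behaved, and (DM2$^*$), (D$^*$) are exactly the instances of (DM2) and (D) that survive the restriction to negated disjuncts. The only point needing a little care is to keep clear where the clause for $\OR$ is genuinely non-classical (so that full (DM2) and full (D) are \emph{not} being claimed) and where Proposition~\ref{p:disjunctionofnegations} legitimately restores the classical reading.
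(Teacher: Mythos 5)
Your proposal is correct and follows essentially the same route as the paper: validity of (N) and (DM1) by pushing the negation clause through $s^*$, and (DM2$^*$), (D$^*$) by invoking Proposition~\ref{p:disjunctionofnegations} to restore the truth-conditional reading of disjunctions of negations. The paper only spells out representative cases ((N), (DM1), and part (c)); your treatment of the remaining items via the filter properties of $||\gamma||$ and the role of $i$ in (i$\OR_1$), (i$\OR_2$) is exactly the routine verification it leaves implicit.
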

\begin{proof}
(a) We will show that (N) preserves validity: Assume that $\alpha \vdash \beta$\ is valid in $\mathcal{N}$. Assume for some state $s$, that $s \Vdash \NOT \beta$. Then $s^* \nVdash \beta$, and thus $s^* \nVdash \alpha$. Hence, $s \Vdash \NOT \alpha$.

(b) We will show the validity of (DM1): Assume for some state $s$, that $s \Vdash \NOT(\alpha \AND \beta)$, i.e. $s^* \nVdash \alpha \AND \beta$. Then $s^* \nVdash \alpha$\ or $s^* \nVdash \beta$, i.e. $s \Vdash \NOT \alpha$ or $s \Vdash \NOT \beta$. Hence $s \Vdash \NOT \alpha \OR \NOT \beta$.

(c) First, assume for some state $s$, that $s \Vdash \NOT \NOT \alpha \AND \NOT \NOT \beta$, i.e. $s^{*} \nVdash \NOT \alpha$ and $s^{*} \nVdash \NOT \beta$. It follows from Proposition \ref{p:disjunctionofnegations} that $s^{*} \nVdash \NOT \alpha \OR \NOT \beta$, i.e. $s \Vdash \NOT (\NOT \alpha \OR \NOT \beta)$. Second, assume $s \Vdash \alpha \AND (\NOT \beta \OR \NOT \gamma)$, i.e. $s \Vdash \alpha$ and $s \Vdash \NOT \beta \OR \NOT \gamma$. So, $s \Vdash \NOT \beta$ or $s \Vdash \NOT \gamma$, by  Proposition \ref{p:disjunctionofnegations}. If $s \Vdash \NOT \beta$, then $s \Vdash \alpha \AND \NOT \beta$, and hence $s \Vdash (\alpha \AND \NOT \beta) \OR (\alpha \AND \NOT \gamma)$. If $s \Vdash \NOT \gamma$, we can reason in an analogous way.
\end{proof}

The claim (b) in Proposition \ref{p:linearityofstar} shows that in any  Routley IF the image of the operation $*$ forms a linearly ordered set. However, the whole frame does not have to be linearly ordered. Consider the two non-linear examples of  RIFs from Fig. \ref{f: counterexamples1}.
\begin{figure}
\begin{center}
\begin{tabular}{cc}
\begin{tikzpicture}
\node (a) at (2,0) {$s$};
\node (b) at (1,1) {$t$};
\node (c) at (3,1) {$u$};
\node (d) at (2,2) {$v$};
\node (e) at (2,-0.6) {$e$};
\node (f) at (2,3.2) {$i$};

\draw[->] (a) -- (d);
\draw[->] (b) -- (d);
\draw[->] (c) -- (d);
\draw[->] (d) edge [loop above] (d);
\draw[->] (e) edge [bend left=80] (f);
\draw[->] (f) edge [bend left=80] (e);
\end{tikzpicture}

&

\begin{tikzpicture}
\node (s) at (2,0) {$s$};
\node (t) at (1,1) {$t$};
\node (u) at (3,1) {$u$};
\node (v) at (2,2) {$v$};
\node (e) at (2,-1.2) {$e$};
\node (i) at (2,2.6) {$i$};

\draw[->] (v) -- (s);
\draw[->] (t) -- (s);
\draw[->] (u) -- (s);
\draw[->] (s) edge [loop below] (d);
\draw[->] (e) edge [bend left=80] (i);
\draw[->] (i) edge [bend left=80] (e);
\end{tikzpicture}
\end{tabular}
\end{center}
\caption{Examples of non-linear Roultey IFs}\label{f: counterexamples1}
\end{figure}
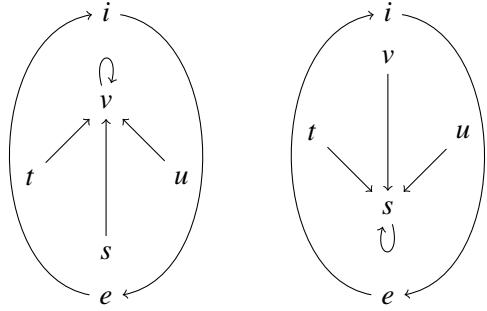
The vertical position determines the order so that, for example, in both cases $s \leq t$ and $u \leq v$. The states $t, u$ are the only incomparable states. It holds that $t \circ u =s$. The arrows determine the behaviour of $*$ so that, for example, on the left $t^*=v$, and on the right $t^*=s$.  These frames can also be used to construct counterexamples to various consequence $L$-pairs. For example, we can observe the following:
\begin{itemize}
\item $\NOT \NOT p \vdash p$\ is not valid in the frame on the left (assume that $V(p)= \{v, i \}$; then $t \vDash \NOT \NOT p$\ but $t \nvDash p$),
\item $p \vdash \NOT \NOT p$\ is not valid in the frame on the right (assume that $V(p)= \{v, i \}$; then $v \vDash p$\ but $v \nvDash \NOT \NOT p$),
\item $\NOT p \AND \NOT q \vdash \NOT (p \OR q)$\ is not valid in the frame on the right (assume that $V(p)= \{t, v, i \}$ and $V(q)= \{u, v, i \}$; then $s \vDash \NOT p \AND \NOT q$\ but $s \nvDash \NOT (p \OR q)$).
\end{itemize}
From the axioms of the distributive lattice logic $\mathsf{DLL}$ we lose the distributive axiom (D), which is not generally valid in the class of all Roultey IFs. This can be illustrated with the example in Fig. \ref{f: counterexamples2}. The states $s, t, u, v, w$\ are ordered as in the non-distributive lattice $N_5$, and the Routley star, for instance, maps $i$ to $e$, $e$ to $i$, and every other element to $u$ (Routley star does not play any role in the example, it is introduced just to make the structure a Routley IF).
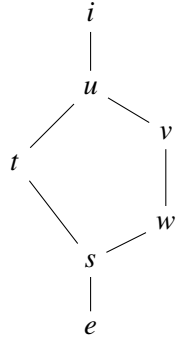
\begin{figure}
\begin{center}
\begin{tabular}{c}
\begin{tikzpicture}
\node (s) at (2,-0.3) {$s$};
\node (t) at (1,1) {$t$};
\node (v) at (3,1.4) {$v$};
\node (w) at (3, 0.2) {$w$};
\node (u) at (2,2) {$u$};
\node (e) at (2,-1.2) {$e$};
\node (i) at (2,3) {$i$};

\draw[-] (e) -- (s);
\draw[-] (s) -- (t);
\draw[-] (s) -- (w);
\draw[-] (w) -- (v);
\draw[-] (v) -- (u);
\draw[-] (t) -- (u);
\draw[-] (u) -- (i);
\end{tikzpicture}
\end{tabular}
\end{center}
\caption{A counterexample to distributivity}\label{f: counterexamples2}
\end{figure}
For a counterexample to $p \AND (q \OR r) \vdash (p \AND q) \OR (p \AND r)$ it suffices to take the valuation $V$ such that $V(p)$ is the principle filter generated by $w$,  $V(q)$ is the principle filter generated by $t$,\ and $V(r)$ is the principal filter generated by $v$. In the resulting model $w$  supports $p \AND (q \OR r)$ but not $(p \AND q) \OR (p \AND r)$.

\begin{definition} 
We say that a consequence $L$-pair $\alpha \vdash \beta$ characterizes a class of Roultey IFs $\mathcal{C}$ if it holds for any Routley IF $\mathcal{I}$ that $\alpha \vdash \beta$ is valid in $\mathcal{I}$ iff $\mathcal{I} \in \mathcal{C}$. 
\end{definition}

\begin{proposition}\label{p:chardndemorg}
The following holds:
\begin{itemize} 
\item[(a)] $p \vdash \NOT \NOT p$\ characterizes the class of Routley IFs satisfying $s \leq s^{**}$, for each state $s$,
\item[(b)] $\NOT \NOT p \vdash p$\ characterizes the class of Routley IFs satisfying $s^{**} \leq s$, for each state $s$,
\item[(c)] $\NOT p \AND \NOT q \vdash \NOT (p \OR q)$\ characterizes the class of Routley IFs satisfying the following property: for all states $s, t, u$, if $t \circ u \leq s^*$ then $t \leq s^*$ or $u \leq s^*$.
\end{itemize}
\end{proposition}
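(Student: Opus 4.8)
The plan is to prove each of the three biconditionals by splitting it into a \emph{soundness} direction (the stated frame condition implies validity of the $L$-pair) and a \emph{completeness} direction (if the condition fails at some states, a suitable valuation refutes the $L$-pair). In every soundness argument I would just unfold the semantic clauses and use that, in a Routley IM, each proposition $|| \gamma ||$ is a proper filter by Proposition \ref{p:charoffilters}, in particular upward closed. In every completeness argument the single workhorse will be the principal filter $\uparrow a = \{ b \in S \mid a \leq b \}$; by the alternative description of proper filters as the non-empty proper upward-closed $\circ$-closed sets, $\uparrow a$ is a proper filter exactly when $a \neq e$, and choosing valuations of this form lets me read the semantic clauses back off as order statements about $*$ and $\circ$.

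For (a) I would argue as follows. If $s \leq s^{**}$ holds everywhere and $s \Vdash p$, then $s \in V(p)$, so $s^{**} \in V(p)$ by upward closure, i.e. $s^{**} \Vdash p$, which is just $s^* \nVdash \NOT p$, i.e. $s \Vdash \NOT \NOT p$. Conversely, fix a state $s$; the case $s = e$ is immediate since $e^{**} = i^* = e$, so assume $s \neq e$ and put $V(p) = \uparrow s$. Then $s \Vdash p$, hence $s \Vdash \NOT \NOT p$ by validity, i.e. $s^{**} \Vdash p$, i.e. $s^{**} \in \uparrow s$, which says $s \leq s^{**}$. Part (b) is the exact dual: for soundness, $s \Vdash \NOT \NOT p$ means $s^{**} \Vdash p$, and $s^{**} \leq s$ with upward closure gives $s \Vdash p$; for completeness one takes $V(p) = \uparrow s^{**}$ when $s^{**} \neq e$ (the case $s^{**} = e$ being trivial) and reads $s^{**} \leq s$ off from $s \Vdash \NOT \NOT p$ implying $s \Vdash p$.

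For (c): for soundness, assume the closure condition and suppose $s \Vdash \NOT p \AND \NOT q$, i.e. $s^* \nVdash p$ and $s^* \nVdash q$. If $s^* \Vdash p \OR q$, there are $t, u$ with $t \Vdash p$, $u \Vdash q$ and $t \circ u \leq s^*$; the condition then gives $t \leq s^*$ or $u \leq s^*$, and upward closure of $|| p ||$ or $|| q ||$ forces $s^* \Vdash p$ or $s^* \Vdash q$, a contradiction. Hence $s^* \nVdash p \OR q$, i.e. $s \Vdash \NOT (p \OR q)$. For completeness, suppose $t \circ u \leq s^*$ while $t \not\leq s^*$ and $u \not\leq s^*$. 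Since $e$ is the least element, $t \neq e$ and $u \neq e$, so $V(p) = \uparrow t$ and $V(q) = \uparrow u$ are legitimate valuations. Then $s^* \notin V(p)$ and $s^* \notin V(q)$, so $s \Vdash \NOT p$ and $s \Vdash \NOT q$; meanwhile $t \Vdash p$, $u \Vdash q$ and $t \circ u \leq s^*$ give $s^* \Vdash p \OR q$, i.e. $s \nVdash \NOT (p \OR q)$. This contradicts validity of $\NOT p \AND \NOT q \vdash \NOT (p \OR q)$, so the condition must hold.

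The only delicate point I anticipate is bookkeeping around the degenerate states: one must check that each principal filter used as a valuation is genuinely proper, which fails precisely for $\uparrow e$, and observe that in exactly those blocked cases ($s = e$ in (a), $s^{**} = e$ in (b), $t = e$ or $u = e$ in (c)) the frame condition holds automatically, so no generality is lost. Beyond that the argument is a routine unfolding of the four semantic clauses together with the proper-filter property from Proposition \ref{p:charoffilters}.
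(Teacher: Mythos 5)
Your proof is correct and follows essentially the same route as the paper's: soundness by unfolding the clauses and using that propositions are upward-closed filters, and completeness by refuting the pair with principal-filter valuations $\uparrow s$, $\uparrow s^{**}$, $\uparrow t$, $\uparrow u$. The only cosmetic difference is that the paper argues the completeness direction contrapositively (assuming the frame condition fails), which makes the degenerate cases you handle explicitly ($s=e$, $s^{**}=e$, $t=e$ or $u=e$) impossible from the start, so no separate bookkeeping is needed there.
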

\begin{proof}
(a) Assume that $s \leq s^{**}$, for each state $s$. Assume $s \vDash \alpha$. Then $s^{**} \vDash \alpha$, i.e. $s \vDash \NOT \NOT \alpha$. Now assume that there is a state $s$\ such that $s \nleq s^{**}$. Take a valuation $V$ such that $V(p)=\{t~|~s \leq t \}$. Then $s^{**} \nvDash p$, and so $s \nvDash \NOT \NOT p$. But $s \vDash p$.

(b) Assume that $s^{**} \leq s$, for each state $s$. Assume $s \vDash \NOT \NOT \alpha$. Then $s^{**} \vDash \alpha$, and hence $s \vDash \alpha$. Now assume that there is a state $s$ such that $s^{**} \nleq s$. Take a valuation $V$ such that $V(p)=\{t~|~s^{**} \leq t \}$. Then $s \nvDash p$. But $s \vDash \NOT \NOT p$.

(c) First, assume that for all states $s, t, u$, if $t \circ u \leq s^*$ then $t \leq s^*$ or $u \leq s^*$. Assume $s \nvDash \NOT (\alpha \OR \beta)$, i.e. $s^* \vDash \alpha \OR \beta$. So, there are $t, u$ such that $t \vDash \alpha$, $u \vDash \beta$, and $t \circ u \leq s^*$. Then $t \leq s^*$ or $u \leq s^*$ and thus $s^* \vDash \alpha$ or $s^* \vDash \beta$. Hence $s \nvDash \NOT \alpha \AND \NOT \beta$. Second, assume that there are states $s, t, u$\ such that $t \circ u \leq s^*$\ but neither $t \leq s^*$ nor $u \leq s^*$. Take a valuation $V$ such that $V(p)=\{v~|~t \leq v \}$ and $V(q)=\{v~|~u \leq v \}$. So, $ s^* \nvDash p$ and  $s^* \nvDash q$, i.e. $s \vDash \NOT p \AND \NOT q$. But since $t \circ u \vDash p \OR q$, we obtain $s^* \vDash p \OR q$, and so $s \nvDash \NOT (p \OR q)$. 
\end{proof}

Note that the frame pictured in Fig. \ref{f: counterexamples1} on the left satisfies the conditions (a) and (c) but not (b), from Proposition \ref{p:chardndemorg}, and the frame on the right satisfies (b) but neither  (a) nor (c). As a consequence of Proposition \ref{p:chardndemorg} we obtain the following rather surprising fact.

\begin{proposition}
Let $\mathcal{I}$ be a Routley IF.  If $p \vdash \NOT \NOT p$ and $\NOT \NOT p \vdash p$ are both valid in  $\mathcal{I}$ then $\mathcal{I}$ is linearly ordered by $\leq$.
\end{proposition}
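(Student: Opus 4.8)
The plan is to reduce the statement to the two characterization results already established. First I would observe that the hypothesis says both $p \vdash \NOT\NOT p$ and $\NOT\NOT p \vdash p$ are valid in $\mathcal{I}$, so Proposition~\ref{p:chardndemorg}(a) gives $s \leq s^{**}$ for every state $s$, and Proposition~\ref{p:chardndemorg}(b) gives $s^{**} \leq s$ for every state $s$. Since $\leq$ is a partial order, combining these yields $s = s^{**}$ for all $s \in S$; in other words $*$ is an involution on $S$, and in particular $*$ is a bijection (it is its own inverse).

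Next I would invoke Proposition~\ref{p:linearityofstar}(b): in any Routley IF, for all $t, u \in S$ we have $t^* \leq u^*$ or $u^* \leq t^*$, i.e. the image $\{ s^* \mid s \in S \}$ is linearly ordered by $\leq$. The point is that under the involutivity just derived, this image is all of $S$. Concretely, given arbitrary $s, r \in S$, apply Proposition~\ref{p:linearityofstar}(b) to the pair $s^*, r^*$ to get $(s^*)^* \leq (r^*)^*$ or $(r^*)^* \leq (s^*)^*$; rewriting using $s^{**} = s$ and $r^{**} = r$, this says $s \leq r$ or $r \leq s$. Hence $\leq$ is a total order on $S$, i.e. $\mathcal{I}$ is linearly ordered.

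There is essentially no obstacle here beyond assembling the earlier propositions correctly; the only point that requires a moment's care is noticing that the involutivity of $*$ (obtained from the two double negation laws) is exactly what promotes the linearity of the $*$-image, already guaranteed by Proposition~\ref{p:linearityofstar}(b), to linearity of the whole frame.
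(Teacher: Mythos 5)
Your proof is correct and follows the same route as the paper: derive $s = s^{**}$ from Proposition~\ref{p:chardndemorg}(a) and (b), note that this makes every state a $*$-image, and then apply Proposition~\ref{p:linearityofstar}(b) to conclude linearity of the whole order. Your version merely spells out the final step (applying (b) to $s^*$ and $r^*$ and rewriting via involutivity) more explicitly than the paper does.
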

\begin{proof}
If  $p \vdash \NOT \NOT p$ and $\NOT \NOT p \vdash p$ are both valid in  $\mathcal{I}$, then according to (a) and (b) of Proposition \ref{p:chardndemorg}, $s = s^{**}$, for every state $s$. So,  the image of $*$ is the whole set of states. Then, according to (b) of Proposition \ref{p:linearityofstar}, $\mathcal{I}$ must be linearly ordered.
\end{proof}
The converse of the previous proposition is not true since there are linearly ordered IFs in which $s = s^{**}$ does not hold generally. These frames are discussed in the following section. 

\section{Linearly ordered Routley IFs}\label{sec:linframes}

Linearly ordered Routley IFs (i.e. those Routley IFs in which $\leq$ is a linear order) are special in that they are in the common area of the information-based semantics and the standard semantics. The reason is that in the linearly ordered Routley IFs our semantic condition for disjunction reduces to the standard one.
\begin{proposition}
For any linear Routley IM, any state $s$\ in that model, and every $L$-formulas $\alpha, \beta$:
\begin{itemize}
\item[] $s \vDash \alpha \OR \beta$ iff $s \vDash \alpha$ or $s \vDash \beta$.
\end{itemize}
\end{proposition}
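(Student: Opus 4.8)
The right-to-left direction is immediate and does not use linearity: if $s \Vdash \alpha$, then since $i$ supports every formula (by Proposition \ref{p:charoffilters}), we have $i \Vdash \beta$, and $s = s \circ i \leq s$, so the existential clause for disjunction is witnessed by $t = s$, $u = i$, giving $s \Vdash \alpha \OR \beta$; the case $s \Vdash \beta$ is symmetric. So the content of the proposition is the left-to-right direction.

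For the left-to-right direction, suppose $s \Vdash \alpha \OR \beta$. By the semantic clause there are states $t, u$ with $t \Vdash \alpha$, $u \Vdash \beta$, and $t \circ u \leq s$. Now I would invoke linearity of $\leq$ to split into two cases. If $t \leq u$, then by definition of the order $t \circ u = t$, hence $t \leq s$; since $|| \alpha ||$ is a proper filter in the underlying Routley IM (this is exactly what the conditions (a)--(c) of Proposition \ref{p:charoffilters} secure) and proper filters are upward closed, $t \Vdash \alpha$ together with $t \leq s$ yields $s \Vdash \alpha$. If instead $u \leq t$, then symmetrically $t \circ u = u \leq s$, and upward closure of $|| \beta ||$ gives $s \Vdash \beta$. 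Either way $s \Vdash \alpha$ or $s \Vdash \beta$, as required.

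There is no real obstacle here: the argument is a short combination of (i) the fact that $i$ is the top element supporting everything, (ii) the reduction $t \circ u \in \{t, u\}$ forced by linearity, and (iii) the already-established fact that every formula denotes a proper (hence upward closed) filter. The only point worth stating carefully is that we are relying on working inside a Routley IM rather than merely a proto-IM, so that Proposition \ref{p:charoffilters} applies and $||\alpha||$, $||\beta||$ are genuinely upward closed.
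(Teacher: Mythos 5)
Your proof is correct and takes essentially the same approach as the paper: the paper's entire argument is the observation that in a linearly ordered Routley IF one has $t \circ u = t$ or $t \circ u = u$, which (combined with upward closure of $\|\alpha\|$ and $\|\beta\|$, exactly as you spell out) gives the left-to-right direction. Your handling of the right-to-left direction via the witness $u = i$ also matches the paper's earlier remark that $s = s \circ i$ and $i$ supports every formula, so nothing differs in substance.
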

\begin{proof}
In any  linearly ordered Routley IF, $t \circ u= t$ or $t \circ u= u$.
\end{proof}

Let us characterize the logic of all linearly ordered Routley IFs by a deductive system. For any natural number $n$\ let $\NOT^n \alpha$\ denote the formula $\NOT \ldots \NOT \alpha$ with $n$ occurrences of $\NOT$ (in particular, $\NOT^0 \alpha=\alpha$). We will also need the following notation. If $n$ is a natural number and $s$ is a state in a Routley IM, then $s^{n(*)}$ is the state $s^{* \ldots *}$ with $n$ occurrences of $*$ (in particular, $s^{0(*)}=s$). Note that for any natural number $k$, $s^{2k(*)} \Vdash \alpha$ iff $s \Vdash \NOT^{2k(*)}\alpha$, and $s^{2k+1(*)} \Vdash \alpha$ iff $s \nVdash \NOT^{2k+1(*)}\alpha$.

Consider the deductive system consisting of the axioms and rules of $\mathsf{DLLR}$ enriched with the following two axioms for each natural number $k$:
\begin{center}
\begin{tabular}{cc}
(L1) $\alpha \AND \NOT^{2k} \beta \vdash  \NOT^{2k}  \alpha \OR \beta$ & (L2) $\alpha \AND \NOT^{2k+1} \alpha \vdash  \beta \OR \NOT^{2k+1} \beta$  \\
\end{tabular}
\end{center}
Let us denote this logic, or more precisely the set of $L$-pairs derivable in the resulting system, as $\mathsf{LinDLLR}$.

\vspace{0.2cm}

\begin{theorem}\label{p:compforlinrifs}
$\alpha \vdash \beta$ is valid in all linearly ordered Routley IFs iff $\alpha \vdash \beta \in \mathsf{LinDLLR}$.
\end{theorem}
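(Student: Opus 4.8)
The plan is to treat the two directions separately, beginning with the routine one. Validity in every Routley IM of (ID), the (e$\AND_i$), the (i$\OR_i$), (DM1) and of the rules (T), (i$\AND$), (e$\OR$), (N) is already Proposition~\ref{p:somevaliditiesinifs}. In a linear Routley IF $t\circ u=t$ or $t\circ u=u$, so disjunction is truth-conditional (as in the proposition immediately preceding the theorem) and the frame is a distributive lattice; hence (D) is valid, and (DM2) is valid for the same reason (or by Proposition~\ref{p:chardndemorg}(c), whose frame condition holds trivially on a chain). For (L1): if $s\Vdash\alpha$ and $s\Vdash\NOT^{2k}\beta$, i.e.\ $s^{2k(*)}\Vdash\beta$, then by linearity either $s\leq s^{2k(*)}$, so $s^{2k(*)}\Vdash\alpha$ by upward closure and $s\Vdash\NOT^{2k}\alpha$, or $s^{2k(*)}\leq s$, so $s\Vdash\beta$; either way $s\Vdash\NOT^{2k}\alpha\OR\beta$. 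For (L2): if $s\Vdash\alpha$ and $s\Vdash\NOT^{2k+1}\alpha$, i.e.\ $s^{(2k+1)(*)}\nVdash\alpha$, then $s\leq s^{(2k+1)(*)}$ is impossible (upward closure would give $s^{(2k+1)(*)}\Vdash\alpha$), so $s^{(2k+1)(*)}\leq s$; then $s\nVdash\beta$ forces $s^{(2k+1)(*)}\nVdash\beta$, i.e.\ $s\Vdash\NOT^{2k+1}\beta$, so $s\Vdash\beta\OR\NOT^{2k+1}\beta$.

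\textbf{Completeness, the linearity lemma.} Suppose $\alpha\vdash\beta\notin\mathsf{LinDLLR}$; by a Lindenbaum-style extension take a prime $\mathsf{LinDLLR}$-theory $\Omega$ with $\alpha\in\Omega$ and $\beta\notin\Omega$. As in Section~\ref{sec:rsstandard} (using (DM1), (DM2), (N)), if $\Gamma$ is a prime $\mathsf{LinDLLR}$-theory so is $\Gamma^{*}=\{\gamma\mid\NOT\gamma\notin\Gamma\}$, and $*$ is $\subseteq$-antitone; so every iterate $\Omega^{n(*)}$ is a prime $\mathsf{LinDLLR}$-theory. The heart of the argument is that the orbit $\mathcal{O}=\{\Omega^{n(*)}\mid n\in\mathbb{N}\}$ is linearly ordered by inclusion. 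Using $\NOT^{2k}\gamma\in\Gamma$ iff $\gamma\in\Gamma^{2k(*)}$, and $\NOT^{2k+1}\gamma\in\Gamma$ iff $\gamma\notin\Gamma^{(2k+1)(*)}$ (immediate from the definition of $*$ on theories), I would show that every prime $\mathsf{LinDLLR}$-theory $\Gamma$ is $\subseteq$-comparable with each $\Gamma^{n(*)}$: if $n=2k$ and $\Gamma\not\subseteq\Gamma^{2k(*)}$, fix $\gamma\in\Gamma\setminus\Gamma^{2k(*)}$; for $\delta\in\Gamma^{2k(*)}$ one has $\NOT^{2k}\delta\in\Gamma$, so $\gamma\AND\NOT^{2k}\delta\in\Gamma$, so by (L1) and primeness $\NOT^{2k}\gamma\in\Gamma$ or $\delta\in\Gamma$; the first is excluded, so $\delta\in\Gamma$, whence $\Gamma^{2k(*)}\subseteq\Gamma$. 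If $n=2k+1$ and $\Gamma\not\subseteq\Gamma^{(2k+1)(*)}$, fix $\gamma\in\Gamma\setminus\Gamma^{(2k+1)(*)}$, so $\NOT^{2k+1}\gamma\in\Gamma$; then for every $\delta$, (L2) and primeness give $\delta\in\Gamma$ or $\NOT^{2k+1}\delta\in\Gamma$, i.e.\ $\delta\in\Gamma$ or $\delta\notin\Gamma^{(2k+1)(*)}$, which says $\Gamma^{(2k+1)(*)}\subseteq\Gamma$. Applying this with $\Gamma=\Omega^{m(*)}$ makes every two members of $\mathcal{O}$ comparable, so $\mathcal{O}$ is a chain.

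\textbf{The canonical model and the truth lemma.} Put $S=\mathcal{O}\cup\{i,e\}$, where $i$ is the set of all $L$-formulas and $e=\emptyset$; let $\circ$ be intersection (on the chain $S$ this is the smaller of the two sets, hence stays in $S$) and let $*$ act on $S$ by $\Omega^{n(*)}\mapsto\Omega^{(n+1)(*)}$, $i\mapsto e$, $e\mapsto i$, which agrees with $*$ on theories since $i^{*}=\emptyset=e$ and $e^{*}=i$. This is a linear Routley IF: it is a proto-IF whose induced order $\leq$ is $\subseteq$ and hence linear; $i^{*}=e$ and $e^{*}=i$ give condition~(a) of Proposition~\ref{p:charoffilters}, $\subseteq$-antitonicity of $*$ gives~(b), and~(c) is automatic on a chain. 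Take the valuation $V(p)=\{s\in S\mid p\in s\}$, which is a proper filter since membership of $p$ is upward closed under $\subseteq$, preserved by intersection, holds at $i$ and fails at $e$. Then a straightforward induction on $\gamma$ proves the truth lemma $s\Vdash\gamma$ iff $\gamma\in s$: atoms by definition; $\AND$ by closure of theories under $\AND$ and by (e$\AND_i$); $\OR$ by truth-conditionality of disjunction in linear frames together with primeness and (i$\OR_i$); $\NOT$ using $s^{*}=\{\gamma\mid\NOT\gamma\notin s\}$ for every $s\in S$. Since $\Omega\in S$ with $\alpha\in\Omega$ and $\beta\notin\Omega$, we obtain $\Omega\Vdash\alpha$ and $\Omega\nVdash\beta$, so $\alpha\vdash\beta$ fails in a linear Routley IF.

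I expect the main obstacle to be the linearity lemma for $\mathcal{O}$; the rest is a mild adaptation of the standard Routley-star canonical-model construction. Axioms (L1) and (L2) are tailored precisely so that facts about which $\NOT$-iterates of a formula lie in a prime theory translate into $\subseteq$-comparability of that theory with its $*$-iterates, and the subtle points are getting this translation right and noticing that comparability of a theory with all of its own iterates already forces the entire orbit to be a chain.
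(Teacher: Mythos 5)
Your proof is correct and follows essentially the same route as the paper's: soundness by the same case split on $s\leq s^{2k(*)}$ versus $s^{2k(*)}\leq s$, and completeness via the same canonical construction from the orbit $\{\Omega^{n(*)}\}$ together with $i$ and $e$, with the comparability lemma proved from (L1), (L2) and primeness (your elementwise contrapositive argument is just a cosmetic variant of the paper's proof by contradiction of Lemma~\ref{c: lin}). You are merely more explicit about a few points the paper leaves implicit, such as the validity of (D) and (DM2) on linear frames and the step from comparability of $\Omega^{m(*)}$ with its iterates to linearity of the whole orbit.
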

\begin{proof}
The right-to-left implication amounts to soundness of the system with respect to linear Routley IFs. Let us check the soundness of the two axioms (L1) and (L2).

(L1) Let $s \Vdash \alpha \AND \NOT^{2k} \beta$\ in a linear Routley IM, that is $s \Vdash \alpha$\ and $s^{2k(*)} \Vdash \beta$. We will consider two possible cases. First, assume that $s \leq s^{2k(*)}$. Then $s^{2k(*)} \Vdash \alpha$, and hence $s \Vdash  \NOT^{2k}\alpha$. It follows that $s \Vdash  \NOT^{2k}\alpha \OR \beta$. Second, assume that $s^{2k(*)} \leq s$. Then $s \Vdash \beta$ and thus  $s \Vdash  \NOT^{2k}\alpha \OR \beta$.

(L2) Let $s \Vdash \alpha \AND \NOT^{2k+1} \alpha$ in a linear Routley IM, that is $s \Vdash \alpha$\ and $s^{2k+1(*)} \nVdash \alpha$. It follows that $s \nleq s^{2k+1(*)}$ and hence $s^{2k+1(*)} \leq s$. If $s \Vdash \beta$ then also $s \Vdash  \beta \OR \NOT^{2k+1} \beta$. If $s \nVdash \beta$\ then $s^{2k+1(*)}  \nVdash \beta$\ and so $s \Vdash \NOT^{2k+1} \beta$. Hence, we obtain  also in this case $s \Vdash  \beta \OR \NOT^{2k+1} \beta$.
  
Now we will prove the left-to-right implication, i.e. completeness of the system. For any set of formulas  $\Omega$\ we define $\Omega^*=\{\alpha \in L \mid \NOT \alpha \notin \Omega \}$. Note that if $\Omega$ is a prime $\mathsf{LinDLLR}$-theory then $\Omega^*$ is also a prime $\mathsf{LinDLLR}$-theory. As before, $\Omega^{n(*)}$ denotes $\Omega^{* \ldots *}$ with $n$ occurrences of $*$. Now we will prove the following auxiliary claim:

\begin{lemma}\label{c: lin}
Let $\Omega$ be a prime $\mathsf{LinDLLR}$-theory and $n$ a natural number. Then $\Omega \subseteq \Omega^{n(*)}$ or $\Omega^{n(*)} \subseteq \Omega$. 
\end{lemma}

For the sake of contradiction, assume that $\Omega \nsubseteq \Omega^{n(*)}$ and $\Omega^{n(*)} \nsubseteq \Omega$. So, there is $\alpha \in \Omega$\ such that $\alpha \notin \Omega^{n(*)}$, and there is $\beta \in \Omega^{n(*)}$ such that $\beta \notin \Omega$. We distinguish two possible cases. First, assume that $n=2k$. Then $\NOT^{n} \alpha \notin \Omega$ and $\NOT^{n} \beta \in \Omega$. Hence, we have $\alpha \AND \NOT^{n} \beta \in \Omega$ and $\NOT^{n} \alpha \OR \beta \notin \Omega$. This is a contradiction with the assumption that (L1) is in $\mathsf{LinDLLR}$. Second, assume that $n=2k+1$. Then $\NOT^{n} \alpha \in \Omega$ and $\NOT^{n} \beta \notin \Omega$. Then $\alpha \AND \NOT^{n} \alpha \in \Omega$ and $\beta \OR \NOT^{n} \beta \notin \Omega$ which is in contradiction with the assumption that (L2) is in $\mathsf{LinDLLR}$. This finishes the proof of Lemma \ref{c: lin}.

Now assume that $\alpha \vdash \beta \notin \mathsf{LinDLLR}$. By the standard procedure we can construct a prime $\mathsf{LinDLLR}$-theory $\Delta$ such that $\alpha \in \Delta$ and $\beta \notin \Delta$. Relative to  $\alpha \vdash \beta$ we construct  a Routley IM  $\mathcal{M}=\left\langle S, \circ, *,  i, e, V \right\rangle$ where $i$ is the set of all $L$-formulas; $e$ is the empty set; $S=\{i, e\} \cup \{\Delta^{n(*)} \mid n \in \mathbb{N} \}$; the star operation is extended to $S$\ by fixing $i^*=e$ and $e^*=i$; $\circ$ is defined as intersection; and $V(p)=\{\Omega \in S \mid p \in \Omega \}$. 

Note that Lemma \ref{c: lin} guarantees that $S$ is closed under intersection.  Since it is also closed under $*$ and satisfies all the desired properties we obtain that $\mathcal{M}$ is ineed a Routley IM. Lemma \ref{c: lin} further guarantees that $\mathcal{M}$ is linearly ordered. Now it can be shown by a straightforward induction that support corresponds to membership in this particular model.
\begin{lemma}
For any $\Omega \in S$\ and any $L$-formula $\alpha$, $\Omega \Vdash \alpha$\ in $\mathcal{M}$ iff $\alpha \in \Omega$.
\end{lemma}
It follows from this lemma that $\Delta \Vdash \alpha$ but $\Delta \nVdash \beta$, so $\alpha \vdash \beta$ is not valid in all linearly ordered IFs. This finishes the proof of Theorem \ref{p:compforlinrifs}.
\end{proof}
Now we will focus on those linearly ordered Routley IFs that validate double negation laws. As is shown in Proposition \ref{p:chardndemorg} these frames are characterized by general validity of $s=s^{**}$. If this condition is satisfied we say that the frame is involutive. Fig. \ref{f: involutivelinrims} shows schematically the structure of involutive linearly ordered Routley IFs.

\begin{figure}
\begin{center}
\begin{tabular}{c}
\begin{tikzpicture}
\node (a) at (0,0.1) {$\vdots$};
\node (b) at (0,0.5) {$\bullet$};
\node (c) at (0,-0.5) {$\bullet$};
\node (d) at (0,1.1) {$\vdots$};
\node (e) at (0,-0.9) {$\vdots$};
\node (f) at (0,1.5) {$\bullet$};
\node (g) at (0,-1.5) {$\bullet$};
\node (h) at (0,2.1) {$\vdots$};
\node (i) at (0,-1.9) {$\vdots$};
\node (j) at (0,2.5) {$i$};
\node (k) at (0,-2.5) {$e$};

\draw[->] (b) edge [bend left=80] (c);
\draw[->] (c) edge [bend left=80] (b);
\draw[->] (f) edge [bend left=80] (g);
\draw[->] (g) edge [bend left=80] (f);
\draw[->] (j) edge [bend left=80] (k);
\draw[->] (k) edge [bend left=80] (j);
\end{tikzpicture}
\end{tabular}
\end{center}
\caption{Involutive linearly ordered Roultey IFs}\label{f: involutivelinrims}
\end{figure}
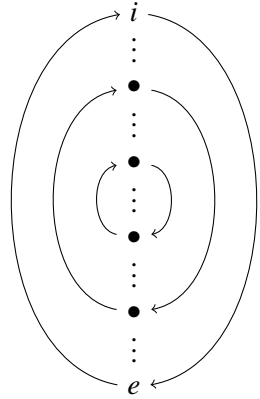

We will prove that the logic of all involutive linearly ordered Routley IFs coincides with what Dunn in \cite{Dunn99} called \textit{Kalman logic} ($\mathsf{KL}$). This logic is axiomatized by $\mathsf{DLLR}$ enriched with double negation laws (DN1) and (DN2), and the following axiom:
\begin{center}
\begin{tabular}{c}
(KA) $\alpha \AND \NOT \alpha \vdash  \beta \OR \NOT \beta$  \\
\end{tabular}
\end{center}





Interestingly, this logic is the $\{\AND, \OR, \NOT\}$-fragment of the relatively well-known and studied logic $\mathsf{R}$-mingle (see \cite{Dunn99}). We now prove completeness of this logic with respect to all involutive linearly ordered Routley IFs, and also with respect to the particular Routley IF $\mathcal{I}(\mathsf{KL})$ depicted in Fig. \ref{f: ikl}.
\begin{theorem}
$\alpha \vdash \beta$ is valid in all involutive linearly ordered Routley IFs iff $\alpha \vdash \beta \in \mathsf{KL}$. Moreover,  $\mathsf{KL}$ is also the set of all $L$-pairs that are valid in the Routley IF $\mathcal{I}(\mathsf{KL})$.
\end{theorem}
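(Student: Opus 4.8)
The theorem makes two assertions: completeness of $\mathsf{KL}$ with respect to the class of all involutive linearly ordered Routley IFs, and completeness with respect to the single frame $\mathcal{I}(\mathsf{KL})$. My plan is to get the class result as a near-verbatim variant of Theorem~\ref{p:compforlinrifs}, and then to squeeze the single-frame result out of it by collapsing canonical countermodels onto $\mathcal{I}(\mathsf{KL})$. I would first record that $\mathsf{KL}$ and $\mathsf{LinDLLR}$ together with (DN1) and (DN2) prove exactly the same $L$-pairs: given (DN1) and (DN2), each $\NOT^{2k}\gamma$ is interderivable with $\gamma$ and each $\NOT^{2k+1}\gamma$ with $\NOT\gamma$, so every instance of (L1) reduces to the $\mathsf{DLL}$-theorem $\alpha\AND\beta\vdash\alpha\OR\beta$ and every instance of (L2) reduces to (KA), while (KA) is just (L2) with $k=0$. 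Soundness of $\mathsf{KL}$ over involutive linear frames is then immediate: such frames are linear, so $\mathsf{LinDLLR}$ is sound by Theorem~\ref{p:compforlinrifs}, and (DN1), (DN2) are sound because $s=s^{**}$ holds in involutive frames (Proposition~\ref{p:chardndemorg}).

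For completeness with respect to the class I would follow the proof of Theorem~\ref{p:compforlinrifs} with only cosmetic changes. From $\alpha\vdash\beta\notin\mathsf{KL}$ a Lindenbaum argument produces a prime $\mathsf{KL}$-theory $\Delta$ with $\alpha\in\Delta$, $\beta\notin\Delta$; since $\mathsf{LinDLLR}\subseteq\mathsf{KL}$, $\Delta$ is also a prime $\mathsf{LinDLLR}$-theory, so Lemma~\ref{c: lin} applies, $S=\{i,e\}\cup\{\Delta^{n(*)}\mid n\in\mathbb{N}\}$ is a chain closed under intersection, and the canonical model $\mathcal{M}$ built on it (with $\circ$ intersection, $i^*=e$, $e^*=i$, $V(p)=\{\Omega\in S\mid p\in\Omega\}$) satisfies the truth lemma, hence refutes $\alpha\vdash\beta$. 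The one new point is that (DN1) and (DN2) force $\Omega^{**}=\Omega$ for every prime $\mathsf{KL}$-theory $\Omega$; consequently $\mathcal{M}$ is involutive, and in fact $S=\{i,e,\Delta,\Delta^*\}$ has at most four elements. Moreover $\Delta\notin\{i,e\}$ (from $\alpha\in\Delta$, $\beta\notin\Delta$), and $\Delta^*\notin\{i,e\}$ (using involutivity: e.g. $\Delta^*=i$ would give $\Delta^{**}=e$), so up to isomorphism $\mathcal{M}$ is either the four-element chain with its two middle points interchanged by $*$ or the three-element chain with a $*$-fixed middle point.

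For the single-frame claim one inclusion is free, since $\mathcal{I}(\mathsf{KL})$ is itself an involutive linear Routley IF, so the soundness above gives $\mathsf{KL}\subseteq\{\alpha\vdash\beta\mid\alpha\vdash\beta$ valid in $\mathcal{I}(\mathsf{KL})\}$. For the converse I would take $\alpha\vdash\beta\notin\mathsf{KL}$ and the finite countermodel $\mathcal{M}$ just produced, and show $\mathcal{M}$ is a bounded morphic image of $\mathcal{I}(\mathsf{KL})$ under a surjective map $f$: when $\mathcal{M}$ is the four-element chain it is isomorphic to $\mathcal{I}(\mathsf{KL})$, and when it is the three-element chain it arises from $\mathcal{I}(\mathsf{KL})$ by collapsing the two middle points. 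The map $f$ preserves $i$ and $e$, commutes with $\circ$ and $*$, is monotone, and satisfies the back condition (if $w'\leq f(x)$ in $\mathcal{M}$ then $f(w)=w'$ for some $w\leq x$). Pulling back the valuation, $V'(p):=f^{-1}(V(p))$ is again a proper filter, and one proves by induction on $\gamma$ that $x\Vdash\gamma$ in $\langle\mathcal{I}(\mathsf{KL}),V'\rangle$ iff $f(x)\Vdash\gamma$ in $\mathcal{M}$ --- the negation case using $f(x^*)=f(x)^*$, and the nonstandard disjunction case using monotonicity of $f$ in one direction and the back condition plus surjectivity in the other. Hence $\alpha\vdash\beta$ fails in $\mathcal{I}(\mathsf{KL})$ as well, which closes the argument.

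I expect the genuine work to lie entirely in the single-frame part: first, the bookkeeping that pins the canonical countermodel down to at most four elements and to exactly one of the two shapes above --- this is precisely what makes a single finite frame enough --- and second, getting the bounded-morphism transfer right, in particular choosing the back condition so that the inductive equivalence survives the information-based disjunction clause, and verifying that pulled-back valuations remain proper filters. By contrast, the soundness computations and the reuse of the canonical model from Theorem~\ref{p:compforlinrifs} are routine.
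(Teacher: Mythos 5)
Your proposal is correct and follows essentially the same route as the paper: soundness via the collapse of (L1)/(L2) into (KA) under the double negation laws, and completeness via the prime-theory canonical construction of Theorem \ref{p:compforlinrifs}, with involutivity ($\Delta^{**}=\Delta$) pinning the canonical chain down to at most four points. The only cosmetic difference is in the degenerate case $\Delta=\Delta^*$: the paper duplicates $\Delta$ into two copies to recover the shape of $\mathcal{I}(\mathsf{KL})$ directly, whereas you equivalently collapse $\mathcal{I}(\mathsf{KL})$ onto the three-element chain by a surjective bounded morphism and pull the valuation back --- the same construction read in the opposite direction.
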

\begin{proof}
Notice that the axioms (L1) and (L2) collapse into (KA) in the context of the double negation laws. It follows that (KA) is valid in all involutive linearly ordered Routley IFs which gives us soundness, i.e. the right-to-left implication. In order to prove completeness, i.e. the left-to-right implication, one can show that $\mathsf{KL}$ is sound and complete with respect to the particular involutive linearly ordered Routley IF $\mathcal{I}(\mathsf{KL})$.

\begin{figure}
\begin{center}
\begin{tabular}{c}
\begin{tikzpicture}

\node (b) at (0,0.5) {$s$};
\node (c) at (0,-0.5) {$t$};
\node (j) at (0,1.5) {$i$};
\node (k) at (0,-1.5) {$e$};

\draw[->] (b) edge [bend left=80] (c);
\draw[->] (c) edge [bend left=80] (b);
\draw[->] (j) edge [bend left=80] (k);
\draw[->] (k) edge [bend left=80] (j);
\end{tikzpicture}
\end{tabular}
\end{center}
\caption{The Routley IF $\mathcal{I}(\mathsf{KL})$}\label{f: ikl}
\end{figure}
To prove this claim one can proceed as in Theorem \ref{p:compforlinrifs} and construct for a given $\alpha \vdash \beta \notin \mathsf{KL}$ a prime $\mathsf{KL}$-theory $\Delta$ such that $\alpha \in \Delta$ and $\beta \notin \Delta$. But now, due to the double negation laws, we obtain that $\Delta^{**}=\Delta$. Since $i$ is the set of all $L$-formulas and $e$ is the empty set, it follows that $\Delta \neq i$ and $\Delta \neq e$. Moreover, by the double negation laws we obtain $\NOT \alpha \notin \Delta^*$ and $\NOT \beta \in \Delta^*$, so we have also $\Delta^{*} \neq i$ and $\Delta^{*} \neq e$. If $\Delta \neq \Delta^*$ then the Routley IM described in the proof of Theorem \ref{p:compforlinrifs} has now the structure of $\mathcal{I}(\mathsf{KL})$. It can happen that $\Delta = \Delta^*$ but in that special case we can just take two copies of $\Delta$ and define the Routley star operation accordingly to obtain again the structure of $\mathcal{I}(\mathsf{KL})$.
\end{proof}

Note that since $\mathsf{KL}$ is the logic of one finite frame, it also follows from our proof that this logic is decidable. This is however not a new result because $\mathsf{KL}$ has been characterized, e.g. in \cite{Dunn99}, as a three-valued logic.

\section{The logic of all Routley IFs}\label{sec:logicofallRIF}

The logic of all Routley IFs can be axiomatized in the following way. It contains the axioms and rules of the system $\mathsf{DLLR}$ with a modified distributivity axiom and one of the DeMorgan laws:

\begin{center}
\begin{tabular}{ccccc}
(ID) $\alpha \vdash \alpha$ & (e$\AND_1$) $\alpha \AND \beta \vdash \alpha$ & (e$\AND_2$) $\alpha \AND \beta \vdash \beta$ & (i$\OR_1$) $\alpha \vdash \alpha \OR \beta$ & (i$\OR_2$) $\beta \vdash \alpha \OR \beta$ 
\end{tabular}
\end{center}

\begin{center}
\begin{tabular}{ccc}
(DM1) $\NOT (\alpha \AND \beta) \vdash \NOT \alpha \OR \NOT \beta$ & (DM2$^*$): $\NOT \NOT \alpha \AND \NOT \NOT \beta \vdash \NOT (\NOT \alpha \OR \NOT \beta)$ \\
\end{tabular}
\end{center}

\begin{center}
\begin{tabular}{c}
(D$^*$) $\alpha \AND (\NOT \beta \OR \NOT \gamma) \vdash (\alpha \AND \NOT \beta) \OR (\alpha \AND \NOT \gamma)$  \\
\end{tabular}
\end{center}

\begin{center}
\begin{tabular}{cccc}
(T) $\alpha \vdash \beta, \beta \vdash \gamma / \alpha \vdash \gamma$ & (i$\AND$) $\alpha \vdash \beta, \alpha \vdash \gamma / \alpha \vdash \beta \AND \gamma$ & (e$\OR$) $\alpha \vdash \gamma, \beta \vdash \gamma / \alpha \OR \beta \vdash \gamma$\\
\end{tabular}
\end{center}

\begin{center}
\begin{tabular}{c}
 (N) $\alpha \vdash \beta / \NOT \beta \vdash \NOT \alpha$  
\end{tabular}
\end{center}

Moreover, for each \textit{positive} number $k$ the system contains the following two axioms, which modify the characteristic axioms of $\mathsf{LinDLLR}$:
\begin{center}
\begin{tabular}{cc}
(L1$^*$) $\NOT \alpha \AND \NOT^{2k+1} \beta \vdash  \NOT^{2k+1}  \alpha \OR \NOT \beta$ & (L2$^*$) $\NOT \alpha \AND \NOT^{2k} \alpha \vdash \NOT \beta \OR \NOT^{2k} \beta$  \\
\end{tabular}
\end{center}

Let us call this logic $\mathsf{LRIF}$ (the logic of Routley information frames). Note that even though Routley star is often used in the context of relevant logics, the logic $\mathsf{LRIF}$ (similarly to the logics  $ \mathsf{LinDLLR}$ and $\mathsf{KL}$ discussed in the previous sections) is not ``relevant'' in a strict sense of the word because the axiom (L2$^*$) violates one of the essential features of relevant logics, namely the variable sharing property.

The rest of this paper is devoted to proving completeness of $\mathsf{LRIF}$ with respect to Routley IFs. The proof is similar to the proof of  Theorem \ref{p:compforlinrifs} with some additional complications. First, we have to replace the notion of prime theory with the following one.

\begin{definition}
 A set of $L$-formulas $\Omega$ is a negative prime $\mathsf{LRIF}$-theory of the just described system if it satisfies the following three conditions: (a) if $\alpha \in \Omega$ and $\alpha \vdash \beta \in \mathsf{LRIF}$ then $\beta \in \Omega$; (b)  if $\alpha \in \Omega$ and $\beta \in \Omega$ then $\alpha \AND \beta \in \Omega$; (c) if $\NOT \alpha \OR \NOT \beta \in \Omega$ then $\NOT \alpha \in \Omega$ or $\NOT \beta \in \Omega$. 
\end{definition}
In order to prove our main result, we will need several lemmas.
\begin{lemma}\label{l:negprimeext}
If $\alpha \vdash \beta \notin \mathsf{LRIF}$, then there is a negative prime $\mathsf{LRIF}$-theory $\Omega$ such that $\alpha \in \Omega$ and $\beta \notin \Omega$.
\end{lemma}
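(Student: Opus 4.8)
The plan is to prove this by a Lindenbaum-style extension argument, adapted to the notion of \emph{negative} prime theory. First I would set up the standard machinery: call a set of formulas $\Omega$ \emph{deductively closed} if it satisfies condition (a), and observe that the set $\{\gamma \mid \alpha \vdash \gamma \in \mathsf{LRIF}\}$ is a deductively closed set containing $\alpha$ but, by assumption, not containing $\beta$; using the rule (i$\AND$) one checks it is also closed under conjunction, so this is a natural starting theory $\Omega_0$. The task is then to enlarge $\Omega_0$ to a deductively closed, $\AND$-closed set that still omits $\beta$ and additionally satisfies the negative-primeness condition (c): whenever $\NOT\varphi\OR\NOT\psi\in\Omega$, already $\NOT\varphi\in\Omega$ or $\NOT\psi\in\Omega$.

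The key step is the usual maximization: consider the collection of all deductively closed, $\AND$-closed sets that contain $\alpha$ and omit $\beta$; order it by inclusion; it is nonempty (contains $\Omega_0$) and closed under unions of chains (deductive closure and $\AND$-closure are preserved by directed unions, and $\beta$ stays out since it is out of every member), so by Zorn's Lemma there is a maximal such set $\Omega$. I would then show this maximal $\Omega$ is negative prime. Suppose $\NOT\varphi\OR\NOT\psi\in\Omega$ but $\NOT\varphi\notin\Omega$ and $\NOT\psi\notin\Omega$. Forming the deductive $\AND$-closures of $\Omega\cup\{\NOT\varphi\}$ and $\Omega\cup\{\NOT\psi\}$, maximality forces $\beta$ into each, i.e.\ there are $\delta_1,\delta_2\in\Omega$ with $\delta_1\AND\NOT\varphi\vdash\beta$ and $\delta_2\AND\NOT\psi\vdash\beta$ in $\mathsf{LRIF}$. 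Setting $\delta=\delta_1\AND\delta_2\in\Omega$, one derives $\delta\AND\NOT\varphi\vdash\beta$ and $\delta\AND\NOT\psi\vdash\beta$, hence by (e$\OR$) and distributing, $\delta\AND(\NOT\varphi\OR\NOT\psi)\vdash\beta$ --- and here is where the axiom (D$^*$) is essential, since it licenses $\delta\AND(\NOT\varphi\OR\NOT\psi)\vdash(\delta\AND\NOT\varphi)\OR(\delta\AND\NOT\psi)$. Since $\delta\in\Omega$ and $\NOT\varphi\OR\NOT\psi\in\Omega$ and $\Omega$ is $\AND$-closed, $\delta\AND(\NOT\varphi\OR\NOT\psi)\in\Omega$, so by deductive closure $\beta\in\Omega$, a contradiction.

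I expect the main obstacle to be verifying that the derivations $\delta\AND\NOT\varphi\vdash\beta$, $\delta\AND\NOT\psi\vdash\beta$ actually combine into $\delta\AND(\NOT\varphi\OR\NOT\psi)\vdash\beta$ purely within $\mathsf{LRIF}$: the natural route goes through $(\delta\AND\NOT\varphi)\OR(\delta\AND\NOT\psi)\vdash\beta$ (which is immediate from (e$\OR$)) and then needs (D$^*$) in the stated one-directional form, together with (T); one must be careful that the disjunction occurring is genuinely a disjunction \emph{of negations}, which it is here ($\NOT\varphi\OR\NOT\psi$), so (D$^*$) applies. A secondary point is checking that $\Omega_0$ omits $\beta$ --- this is exactly the hypothesis $\alpha\vdash\beta\notin\mathsf{LRIF}$ --- and that the extensions $\Omega\cup\{\NOT\varphi\}$ used in the maximality argument, when closed up deductively and under $\AND$, are still legitimate members of the poset except for possibly containing $\beta$; that is precisely why maximality yields the derivability facts. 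The rest is routine: condition (a) holds by construction, condition (b) by $\AND$-closure, and $\alpha\in\Omega\supseteq\Omega_0$, $\beta\notin\Omega$ as required.
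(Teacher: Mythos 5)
Your proposal is correct and is exactly the argument the paper has in mind: the paper's proof of this lemma is just the remark that it follows by ``standard Lindenbaum-style construction using the modified distributivity axiom (D$^*$)'', and your Zorn's-Lemma maximization with (D$^*$) supplying negative primeness is precisely that construction, spelled out. The only detail worth noting is the routine one you already flag: the closures of $\Omega\cup\{\NOT\varphi\}$ and $\Omega\cup\{\NOT\psi\}$ give witnesses of the form $\delta_i\AND\NOT\varphi\vdash\beta$ with $\delta_i\in\Omega$ because $\Omega$ is $\AND$-closed, and the monotonicity steps needed to pass to $\delta=\delta_1\AND\delta_2$ are derivable from (e$\AND_1$), (e$\AND_2$), (i$\AND$) and (T).
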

\begin{proof}
This result is obtained by standard Lindenbaum-style construction using the modified distributivity axiom (D$^*$).
\end{proof}

\begin{lemma}\label{l:nptclunderst}
If $\Omega$\ is a negative prime $\mathsf{LRIF}$-theory, then $\Omega^*$ is also a negative prime $\mathsf{LRIF}$-theory.
\end{lemma}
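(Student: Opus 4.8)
The plan is to verify the three defining conditions (a), (b), (c) of a negative prime $\mathsf{LRIF}$-theory for $\Omega^*$, where $\Omega^* = \{\alpha \in L \mid \NOT\alpha \notin \Omega\}$ and $\Omega$ is assumed to be a negative prime $\mathsf{LRIF}$-theory. Throughout I will lean on the rule (N) ($\alpha \vdash \beta / \NOT\beta \vdash \NOT\alpha$), the DeMorgan-style axioms (DM1) and (DM2$^*$), and the primeness condition (c) for $\Omega$ (which concerns disjunctions of \emph{negations}).

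First, for condition (a): suppose $\alpha \in \Omega^*$ and $\alpha \vdash \beta \in \mathsf{LRIF}$; I must show $\beta \in \Omega^*$, i.e. $\NOT\beta \notin \Omega$. From $\alpha \vdash \beta$ and (N) we get $\NOT\beta \vdash \NOT\alpha \in \mathsf{LRIF}$. If $\NOT\beta \in \Omega$ then by closure of $\Omega$ under derivability we would have $\NOT\alpha \in \Omega$, contradicting $\alpha \in \Omega^*$. Hence $\NOT\beta \notin \Omega$, so $\beta \in \Omega^*$.

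Next, condition (b): suppose $\alpha \in \Omega^*$ and $\beta \in \Omega^*$, i.e. $\NOT\alpha \notin \Omega$ and $\NOT\beta \notin \Omega$; I must show $\alpha \AND \beta \in \Omega^*$, i.e. $\NOT(\alpha \AND \beta) \notin \Omega$. Using (DM1), $\NOT(\alpha \AND \beta) \vdash \NOT\alpha \OR \NOT\beta$, so if $\NOT(\alpha \AND \beta) \in \Omega$ then $\NOT\alpha \OR \NOT\beta \in \Omega$; but this is a disjunction of negations, so by condition (c) for $\Omega$ we get $\NOT\alpha \in \Omega$ or $\NOT\beta \in \Omega$, a contradiction. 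Hence $\NOT(\alpha \AND \beta) \notin \Omega$. Finally, condition (c): suppose $\NOT\alpha \OR \NOT\beta \in \Omega^*$, i.e. $\NOT(\NOT\alpha \OR \NOT\beta) \notin \Omega$; I must show $\NOT\alpha \in \Omega^*$ or $\NOT\beta \in \Omega^*$, i.e. $\NOT\NOT\alpha \notin \Omega$ or $\NOT\NOT\beta \notin \Omega$. Arguing contrapositively, if both $\NOT\NOT\alpha \in \Omega$ and $\NOT\NOT\beta \in \Omega$, then by closure under $\AND$ (condition (b) for $\Omega$) we get $\NOT\NOT\alpha \AND \NOT\NOT\beta \in \Omega$, and then by (DM2$^*$), $\NOT(\NOT\alpha \OR \NOT\beta) \in \Omega$, contradicting our assumption. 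So $\Omega^*$ satisfies (c).

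I expect no genuine obstacle here: the argument is a routine dualization, and the only subtlety is making sure each step uses exactly the ingredient that was built into the definition of $\mathsf{LRIF}$ — specifically, that (DM2$^*$) (rather than the full (DM2)) is what licenses condition (c), and that the primeness of $\Omega$ is stated for disjunctions of negations, which is precisely the shape that appears after applying (DM1) in the proof of (b) and matches the hypothesis in the proof of (c). The verification that these match up is the one point worth stating carefully, but it involves no computation.
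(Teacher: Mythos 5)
Your proof is correct and follows essentially the same route as the paper: the key step is exactly the paper's use of (DM2$^*$) to transfer primeness for disjunctions of negations to $\Omega^*$. The paper only spells out condition (c) and leaves the routine verifications of closure under derivability (via (N)) and under conjunction (via (DM1) and the negative primeness of $\Omega$) implicit, which you carry out explicitly and correctly.
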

\begin{proof}
Assume that $\NOT \alpha \OR \NOT \beta \in \Omega^*$. So, $\NOT (\NOT \alpha \OR \NOT \beta) \notin \Omega$. Using the axiom (DM2$^*$) we obtain $\NOT \NOT \alpha \AND \NOT \NOT \beta \notin \Omega$. So, $\NOT \NOT \alpha \notin \Omega$ or $\NOT \NOT \beta \notin \Omega$ and thus $\NOT \alpha \in \Omega^*$ or $\NOT \beta \in \Omega^*$.
\end{proof}

\begin{lemma}\label{l: linorderneg}
If $\Omega$\ is a negative prime $\mathsf{LRIF}$-theory then $\Omega^{m(*)} \subseteq \Omega^{n(*)}$ or $\Omega^{n(*)} \subseteq \Omega^{m(*)}$, for all positive natural numbers $m, n$.
\end{lemma}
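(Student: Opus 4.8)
The statement is an analogue of Lemma \ref{c: lin} (the linearity lemma for $\mathsf{LinDLLR}$), but now the relevant ``iterated-star'' sets are only linearly ordered \emph{from the first star onward}: the axioms (L1$^*$) and (L2$^*$) have an extra negation prefixed to every schematic letter, which is precisely what one expects once the underlying frame need not be linear at the bottom but its $*$-image is (cf. Proposition \ref{p:linearityofstar}(b)). The plan is to first reduce the general claim for positive $m,n$ to the special case where one of the two exponents is $0$ and the other is a positive $k$, and then run a contradiction argument parallel to the one in Lemma \ref{c: lin}, but with $\alpha,\beta$ replaced throughout by $\NOT\alpha,\NOT\beta$ so that the hypotheses land inside the axioms (L1$^*$) and (L2$^*$).

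Here is the reduction. Since $\Omega^{m(*)}$ and $\Omega^{n(*)}$ are themselves negative prime $\mathsf{LRIF}$-theories (Lemma \ref{l:nptclunderst}, applied $\min(m,n)$ times), it suffices to prove: for every negative prime $\mathsf{LRIF}$-theory $\Gamma$ and every positive $k$, $\Gamma \subseteq \Gamma^{k(*)}$ or $\Gamma^{k(*)} \subseteq \Gamma$. Indeed, taking $\Gamma = \Omega^{\min(m,n)(*)}$ and $k = |m-n|$ yields the lemma (when $m=n$ there is nothing to prove). So I would state and prove this ``$0$ versus $k$'' form as the core of the argument.

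For the core claim, suppose toward a contradiction that $\Gamma \nsubseteq \Gamma^{k(*)}$ and $\Gamma^{k(*)} \nsubseteq \Gamma$: pick $\alpha \in \Gamma \setminus \Gamma^{k(*)}$ and $\beta \in \Gamma^{k(*)} \setminus \Gamma$. Using the identities $\gamma \in \Gamma^{j(*)}$ iff $\NOT\gamma \notin \Gamma^{(j-1)(*)}$, unwind membership in $\Gamma^{k(*)}$ down to membership in $\Gamma$: this turns ``$\alpha \notin \Gamma^{k(*)}$'' into ``$\NOT^{k}\alpha \in \Gamma$'' or ``$\NOT^{k}\alpha \notin \Gamma$'' according to the parity of $k$, and similarly for $\beta \in \Gamma^{k(*)}$. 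Split on the parity of $k$. If $k = 2j+1$, then from $\alpha \in \Gamma$, $\NOT^{2j+1}\alpha \in \Gamma$, $\NOT^{2j+1}\beta \notin \Gamma$, $\beta \notin \Gamma$ one assembles (using closure under $\AND$ and primeness) a violation of (L2$^*$) in the form $\NOT(\NOT\alpha) \AND \NOT^{2j}(\NOT\alpha) \vdash \NOT(\NOT\beta) \OR \NOT^{2j}(\NOT\beta)$ — here I line the letters up so the schematic $\alpha,\beta$ of (L2$^*$) are instantiated by $\NOT\alpha,\NOT\beta$ and the resulting formulas are literally $\NOT^{2}\alpha \wedge \NOT^{2j+1}\alpha$ on the left and $\NOT^{2}\beta \OR \NOT^{2j+1}\beta$ on the right, which after noting $\NOT^{2}\alpha$ is available in $\Gamma$ from $\alpha\in\Gamma$ (by (DN1)? — no, (DN1) is \emph{not} in $\mathsf{LRIF}$) forces a small adjustment, see below. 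If $k = 2j$, the analogous bookkeeping produces a violation of (L1$^*$). Either way we contradict the assumption that the relevant axiom lies in $\mathsf{LRIF}$, proving the core claim and hence the lemma.

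The main obstacle I anticipate is exactly the indexing/parity bookkeeping in that last step: (L1$^*$) and (L2$^*$) are stated with a \emph{leading} $\NOT$ on every letter and with exponents $2k+1$ and $2k$ respectively for $k \geq 1$, so one must choose the instantiation and the value of $j$ so that the double-negation prefixes match up \emph{without} appealing to (DN1)/(DN2), which are not available in $\mathsf{LRIF}$. Concretely one should instantiate the axiom schema by $\NOT\alpha$ and $\NOT\beta$ and read off which of $\alpha \in \Gamma$ versus $\NOT^{2}\alpha \in \Gamma$ is actually needed; since $\alpha \in \Gamma$ gives $\NOT^{2}\alpha \in \Gamma$ via (ID) and (N) applied to $\alpha \vdash \NOT\NOT\alpha$? — again unavailable — the clean route is to run the whole contradiction with $\alpha' = \NOT\alpha$, $\beta' = \NOT\beta$ \emph{from the start}, i.e. to pick the witnessing formulas already in negated form, which is legitimate because $\alpha \notin \Gamma^{k(*)}$ is equivalent to a statement about whether some $\NOT^{r}\alpha$ lies in $\Gamma$. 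I would handle the two parities in two short displayed computations and flag that no double-negation principle is used, which is the whole point of why (L1$^*$)/(L2$^*$) replace (L1)/(L2).
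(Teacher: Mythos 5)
Your reduction strips one star too many, and that is exactly where the argument breaks. The ``core claim'' you isolate --- that \emph{every} negative prime $\mathsf{LRIF}$-theory $\Gamma$ satisfies $\Gamma \subseteq \Gamma^{k(*)}$ or $\Gamma^{k(*)} \subseteq \Gamma$ for every positive $k$ --- is not just unproved but false. Consider the Routley IF with $e < c < a, b < i$, where $a, b$ are incomparable, $a \circ b = c$, and $a^* = b^* = c^* = b$, $e^* = i$, $i^* = e$; let $V(p)$ and $V(q)$ be the principal filters generated by $a$ and $b$. The set of formulas supported by $a$ is a negative prime $\mathsf{LRIF}$-theory (closure under consequence by soundness, under $\AND$ by the clause for conjunction, negative primeness by Proposition \ref{p:disjunctionofnegations}), its $*$-image is the support set of $b$, and the two are incomparable ($p$ separates one way, $q$ the other). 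Had your core claim been derivable, the theory $\Delta$ in the canonical construction of Section \ref{sec:logicofallRIF} would be comparable with all $\Delta^{k(*)}$, the canonical frame would be a chain, and $\mathsf{LRIF}$ would collapse into $\mathsf{LinDLLR}$, against the failure of (DM2) on non-linear Routley IFs. The same example also defeats your patch of ``picking the witnessing formulas already in negated form'': since $a^* = b^*$, the two incomparable theories contain exactly the same negated formulas, so no witness of the form $\NOT\alpha_0$ exists; the equivalence you invoke (that $\alpha \notin \Gamma^{k(*)}$ is a condition on $\NOT^{r}\alpha$ and $\Gamma$) does not convert an arbitrary witness into a negated one, and, as you note yourself, (DN1) is not available to turn $\alpha \in \Gamma$ into $\NOT\NOT\alpha \in \Gamma$.

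The repair is to reduce only down to $\min(m,n) = 1$, not to $0$. Put $\Sigma = \Omega^{(\min(m,n)-1)(*)}$, a negative prime theory by Lemma \ref{l:nptclunderst}; it suffices to compare $\Sigma^{*}$ with $\Sigma^{K(*)}$ for $K \geq 2$. Now the hypotheses of the contradiction argument carry the leading negations for free: if $\gamma \in \Sigma^{*} \setminus \Sigma^{K(*)}$ and $\delta \in \Sigma^{K(*)} \setminus \Sigma^{*}$, then for $K = 2j+1$ we get $\NOT\delta \AND \NOT^{2j+1}\gamma \in \Sigma$ while $\NOT^{2j+1}\delta \notin \Sigma$ and $\NOT\gamma \notin \Sigma$, so (L1$^*$) puts the disjunction of negations $\NOT^{2j+1}\delta \OR \NOT\gamma$ into $\Sigma$ and negative primeness yields a contradiction; for $K = 2j$ one gets $\NOT\delta \AND \NOT^{2j}\delta \in \Sigma$ and uses (L2$^*$) analogously. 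No double-negation principle is needed at any point, and this is precisely the paper's intended route: reason as in Lemma \ref{c: lin}, with both exponents kept positive, using (L1$^*$) and (L2$^*$).
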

\begin{proof}
We can reason as in the proof of Lemma \ref{c: lin} using the axioms (L1$^*$) and (L2$^*$).
\end{proof}

\begin{lemma}\label{l: nptclint}
If $\Omega$ is a negative prime $\mathsf{LRIF}$-theory then $\Omega \cap \Omega^{n(*)}$ is a negative prime $\mathsf{LRIF}$-theory, for every positive natural number $n$.
\end{lemma}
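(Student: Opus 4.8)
The plan is to check the three defining conditions of a negative prime $\mathsf{LRIF}$-theory directly for $\Gamma := \Omega \cap \Omega^{n(*)}$. First I would note that $\Omega^{n(*)}$ is itself a negative prime $\mathsf{LRIF}$-theory, which follows by iterating Lemma~\ref{l:nptclunderst} $n$ times. Conditions (a) and (b) are then immediate for $\Gamma$: the intersection of two sets each closed under $\mathsf{LRIF}$-consequence (respectively, under conjunction) is again closed under $\mathsf{LRIF}$-consequence (respectively, under conjunction). So all the work is in condition (c).

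For (c), suppose $\NOT\alpha \OR \NOT\beta \in \Gamma$. Applying negative primeness in $\Omega$ and in $\Omega^{n(*)}$ gives ``$\NOT\alpha \in \Omega$ or $\NOT\beta \in \Omega$'' together with ``$\NOT\alpha \in \Omega^{n(*)}$ or $\NOT\beta \in \Omega^{n(*)}$''. If one of $\NOT\alpha, \NOT\beta$ happens to lie in both $\Omega$ and $\Omega^{n(*)}$ we are done, so the only case to treat is the ``cross'' case: after relabelling $\alpha$ and $\beta$ if necessary, $\NOT\alpha \in \Omega \setminus \Omega^{n(*)}$ and $\NOT\beta \in \Omega^{n(*)} \setminus \Omega$. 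To exploit this I would translate membership in the iterated stars back into $\Omega$, using the standard identities (proved by an easy induction straight from the definition of ${}^*$): $\gamma \in \Omega^{2k(*)}$ iff $\NOT^{2k}\gamma \in \Omega$, and $\gamma \in \Omega^{2k+1(*)}$ iff $\NOT^{2k+1}\gamma \notin \Omega$.

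Then I would split on the parity of $n$. If $n = 2k$ with $k \geq 1$, the cross case becomes $\NOT\alpha \in \Omega$, $\NOT^{2k+1}\beta \in \Omega$, $\NOT^{2k+1}\alpha \notin \Omega$, $\NOT\beta \notin \Omega$; conjoining the first two formulas and applying (L1$^*$) places $\NOT^{2k+1}\alpha \OR \NOT\beta$ in $\Omega$, and since this is a disjunction of two negations, negative primeness of $\Omega$ forces $\NOT^{2k+1}\alpha \in \Omega$ or $\NOT\beta \in \Omega$ --- a contradiction. If $n = 2k+1$ with $k \geq 0$, the cross case becomes $\NOT\alpha \in \Omega$, $\NOT^{2k+2}\alpha \in \Omega$, $\NOT^{2k+2}\beta \notin \Omega$, $\NOT\beta \notin \Omega$; then $\NOT\alpha \AND \NOT^{2(k+1)}\alpha \in \Omega$, and applying (L2$^*$) with index $k+1$ places $\NOT\beta \OR \NOT^{2(k+1)}\beta$ in $\Omega$, again contradicting negative primeness of $\Omega$. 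The symmetric subcase is obtained by interchanging $\alpha$ and $\beta$.

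The main obstacle is exactly this cross case: the disjunct witnessed in $\Omega$ need not be the one witnessed in $\Omega^{n(*)}$, and --- unlike for pairs of strictly positive stars, which Lemma~\ref{l: linorderneg} shows to be $\subseteq$-comparable --- there is in general no containment between $\Omega = \Omega^{0(*)}$ and $\Omega^{n(*)}$ to fall back on (this is precisely why the canonical frame will not be linear). The axioms (L1$^*$) and (L2$^*$) are engineered to bridge this gap; the only delicate point is bookkeeping the parities correctly and observing that the $n = 1$ instance needs (L2$^*$) with $k = 1$, which is available since these axioms are postulated for every positive $k$.
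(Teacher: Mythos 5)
Your proof is correct, but it takes a somewhat different route from the paper's. The paper establishes condition (c) contrapositively: from $\NOT\alpha \notin \Omega$ and $\NOT\beta \notin \Omega^{n(*)}$ it passes to $\alpha \in \Omega^{*}$ and $\beta \in \Omega^{n+1(*)}$, invokes Lemma \ref{l: linorderneg} (both star-counts now being positive) to place $\alpha$ and $\beta$ in a common theory, translates back, and concludes via negative primeness of $\Omega$ and $\Omega^{n(*)}$ (Lemma \ref{l:nptclunderst}) that $\NOT\alpha \OR \NOT\beta$ fails to be in one of them; the incomparability of $\Omega$ with $\Omega^{n(*)}$, which you rightly flag as the obstacle, is thus sidestepped by the shift ``$\NOT\gamma \notin \Theta$ iff $\gamma \in \Theta^{*}$''. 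You instead work directly inside $\Omega$: you reduce to the cross case and refute it by applying (L1$^*$) and (L2$^*$) with explicit parity bookkeeping, so in effect you inline, for the single instance needed, the argument the paper has packaged into Lemma \ref{l: linorderneg} (whose own proof runs on the same axioms). Both arguments are sound; the paper's is shorter because it reuses an already available lemma, while yours is self-contained at the axiom level and has the virtue of making visible exactly which axiom instances are consumed --- only positive $k$, with (L2$^*$) at $k=1$ covering $n=1$ --- which explains why the unstarred (L1)/(L2), invalid outside linear frames, are never required. Your treatment of conditions (a) and (b) and your star-parity translation identities are also correct.
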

\begin{proof}
Assume that we have $\NOT \alpha \notin \Omega \cap \Omega^{n(*)}$, and $\NOT \beta \notin \Omega \cap \Omega^{n(*)}$. Without loss of generality, assume that $\NOT \alpha \notin \Omega$ and $\NOT \beta \notin \Omega^{n(*)}$. Hence, $\alpha \in \Omega^{*}$ and $\beta \in \Omega^{n+1(*)}$. By Lemma \ref{l: linorderneg}, we obtain $\alpha, \beta \in \Omega^{*}$ or $\alpha, \beta \in \Omega^{n+1(*)}$, and thus $\NOT \alpha, \NOT \beta  \notin \Omega$ or $\NOT \alpha, \NOT \beta  \notin \Omega^{n(*)}$. Since, by Lemma \ref{l:nptclunderst}, $\Omega$ and $\Omega^{n(*)}$ are both negative prime $\mathsf{LRIF}$-theories, we obtain $\NOT \alpha \OR \NOT \beta  \notin \Omega$ or $\NOT \alpha \OR \NOT \beta  \notin \Omega^{n(*)}$. It follows that $\NOT \alpha \OR \NOT \beta \notin \Omega \cap \Omega^{n(*)}$.
\end{proof}

\begin{lemma}\label{l:starofintstars}
If $\Omega$\ is a negative prime $\mathsf{LRIF}$-theory then $(\Omega \cap \Omega^{n(*)})^*=\Omega^{*}$ or $(\Omega \cap \Omega^{n(*)})^*=\Omega^{n +1(*)}$, for every positive natural number $n$.
\end{lemma}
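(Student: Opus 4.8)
The plan is to reduce the statement to a one-line set-theoretic computation followed by a single application of Lemma~\ref{l: linorderneg}.

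First I would unfold the definition of the star operation on sets of formulas. Recall that $\Gamma^{*} = \{\alpha \in L \mid \NOT \alpha \notin \Gamma\}$ for any set of formulas $\Gamma$. Hence, for a formula $\alpha$, we have $\alpha \in (\Omega \cap \Omega^{n(*)})^{*}$ iff $\NOT \alpha \notin \Omega \cap \Omega^{n(*)}$, i.e.\ iff $\NOT \alpha \notin \Omega$ or $\NOT \alpha \notin \Omega^{n(*)}$. The first disjunct is equivalent to $\alpha \in \Omega^{*}$ and the second to $\alpha \in (\Omega^{n(*)})^{*} = \Omega^{n+1(*)}$. Therefore
\[
(\Omega \cap \Omega^{n(*)})^{*} = \Omega^{*} \cup \Omega^{n+1(*)} .
\]

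Second, I would apply Lemma~\ref{l: linorderneg} to $\Omega$ with the indices $1$ and $n+1$. Both are positive natural numbers (this is where the hypothesis that $n$ is positive enters), so the lemma yields $\Omega^{*} \subseteq \Omega^{n+1(*)}$ or $\Omega^{n+1(*)} \subseteq \Omega^{*}$. In the former case $\Omega^{*} \cup \Omega^{n+1(*)} = \Omega^{n+1(*)}$, and in the latter $\Omega^{*} \cup \Omega^{n+1(*)} = \Omega^{*}$. Combining either case with the displayed identity gives $(\Omega \cap \Omega^{n(*)})^{*} = \Omega^{n+1(*)}$ or $(\Omega \cap \Omega^{n(*)})^{*} = \Omega^{*}$, as claimed.

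There is no genuine obstacle here: the whole argument rests on the elementary observation that the Routley star of an intersection of sets of formulas is the union of their stars, together with the linear comparability of the iterated stars already secured in Lemma~\ref{l: linorderneg}. The only point to double-check is that Lemma~\ref{l: linorderneg} is actually applicable, which is why the positivity assumption on $n$ matters --- it ensures that both indices $1$ and $n+1$ lie in the range $\{m \in \mathbb{N} \mid m \geq 1\}$ covered by that lemma.
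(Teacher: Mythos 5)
Your proof is correct, but it takes a genuinely different route from the paper's. You start from the purely set-theoretic identity $(\Omega \cap \Omega^{n(*)})^{*} = \Omega^{*} \cup \Omega^{n+1(*)}$, which indeed follows just by unfolding the definition $\Gamma^{*} = \{\alpha \mid \NOT\alpha \notin \Gamma\}$ and holds for arbitrary sets of formulas, and then you invoke Lemma~\ref{l: linorderneg} with the positive indices $1$ and $n+1$ to see that the union collapses to one of its two members. The paper instead proves the two easy inclusions $\Omega^{*} \subseteq (\Omega \cap \Omega^{n(*)})^{*}$ and $\Omega^{n+1(*)} \subseteq (\Omega \cap \Omega^{n(*)})^{*}$ from antitonicity of $*$ and then argues by contradiction: picking $\alpha,\beta$ witnessing the failure of both reverse inclusions, it derives $\NOT\alpha \OR \NOT\beta \notin \Omega \cap \Omega^{n(*)}$ from the negative primeness of the intersection (Lemma~\ref{l: nptclint}) while also deriving $\NOT\alpha \OR \NOT\beta \in \Omega \cap \Omega^{n(*)}$ from closure of theories under disjunction introduction. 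Your argument is shorter and more elementary: it does not need Lemma~\ref{l: nptclint} at all for this step (that lemma is still required elsewhere, e.g.\ for the canonical model's states to be negative prime theories), and it makes transparent that the only nontrivial input is the linear comparability of the iterated stars. The paper's version, by contrast, stays entirely at the level of theory-closure and primeness properties, which keeps the lemma stylistically uniform with the surrounding canonical-model machinery, but it is strictly more work for this particular statement.
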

\begin{proof}
Since $\Omega \cap \Omega^{n(*)} \subseteq \Omega$, we obtain $\Omega^{*}\subseteq (\Omega \cap \Omega^{n(*)})^*$. By the same reasoning, $\Omega^{n+1 (*)}\subseteq (\Omega \cap \Omega^{n(*)})^*$. We have to prove that $(\Omega \cap  \Omega^{n(*)})^* \subseteq \Omega^{*}$ or $(\Omega \cap  \Omega^{n(*)})^* \subseteq \Omega^{n +1(*)}$. For the sake of contradiction assume that there are $\alpha, \beta$\ such that $\alpha, \beta  \in (\Omega \cap \Omega^{n(*)})^*$, $\alpha \notin \Omega^{n+1(*)}$ and $\beta \notin \Omega^{*}$. On one hand, we obtain $\NOT \alpha, \NOT \beta  \notin \Omega \cap \Omega^{n(*)}$. By Lemma \ref{l: nptclint}, $\Omega \cap \Omega^{n(*)}$ is a negative prime $\mathsf{LRIF}$-theory and hence $\NOT \alpha \OR \NOT \beta  \notin \Omega \cap \Omega^{n(*)}$. On the other hand, we obtain $\NOT \alpha \in \Omega^{n(*)}$ and $\NOT \beta \in \Omega$. It follows that $\NOT \alpha \OR \NOT \beta \in \Omega \cap \Omega^{n(*)}$, which is a contradiction.
\end{proof}

\begin{theorem}
$\alpha \vdash \beta$ is valid in all Routley IFs iff $\alpha \vdash \beta \in \mathsf{LRIF}$.
\end{theorem}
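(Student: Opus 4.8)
The plan is to mirror the structure of the completeness proof for Theorem \ref{p:compforlinrifs}, adapting it to negative prime theories and using the lemmas already assembled. Soundness (the right-to-left direction) is the easy half: we must check that (L1$^*$) and (L2$^*$) are valid in every Routley IF. For (L1$^*$), suppose $s \Vdash \NOT\alpha \AND \NOT^{2k+1}\beta$, so $s^* \nVdash \alpha$ and $s^{2k+2(*)} \nVdash \beta$; using Proposition \ref{p:linearityofstar}(b) applied to $s^*$ and $s^{2k+2(*)}$ (both lie in the linearly ordered image of $*$), either $s^* \leq s^{2k+2(*)}$ — whence $s^{2k+2(*)} \nVdash \alpha$, i.e. $s^* \Vdash \NOT^{2k+1}\alpha$, i.e. $s \Vdash \NOT^{2k+1}\alpha \OR \NOT\beta$ — or $s^{2k+2(*)} \leq s^*$ — whence $s^* \nVdash \beta$, i.e. $s \Vdash \NOT\beta$. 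The argument for (L2$^*$) is analogous: from $s \Vdash \NOT\alpha \AND \NOT^{2k}\alpha$ we get $s^* \nVdash \alpha$ and $s^{2k+1(*)} \Vdash \alpha$, forcing $s^{2k+1(*)} \nleq s^*$, hence $s^* \leq s^{2k+1(*)}$ by linearity of the $*$-image, and then a case split on whether $s^* \Vdash \beta$ yields $s \Vdash \NOT\beta \OR \NOT^{2k}\beta$. The remaining axioms and rules are handled by Proposition \ref{p:somevaliditiesinifs}.

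For completeness, assume $\alpha \vdash \beta \notin \mathsf{LRIF}$. By Lemma \ref{l:negprimeext} fix a negative prime $\mathsf{LRIF}$-theory $\Delta$ with $\alpha \in \Delta$, $\beta \notin \Delta$. I would build the canonical model $\mathcal{M} = \langle S, \circ, *, i, e, V\rangle$ with $i$ the set of all $L$-formulas, $e = \emptyset$, and $S = \{i, e\} \cup \{\Delta^{n(*)} \mid n \geq 1\} \cup \{\Delta^{n(*)} \cap \Delta^{m(*)} \mid n, m \geq 1\}$; set $\circ = \cap$, extend $*$ by $i^* = e$, $e^* = i$, $(\Delta^{n(*)})^* = \Delta^{n+1(*)}$, and $(\Delta^{n(*)} \cap \Delta^{m(*)})^*$ is well-defined by Lemma \ref{l:starofintstars} (it equals one of $\Delta^{n+1(*)}$, $\Delta^{m+1(*)}$, possibly collapsed further using Lemma \ref{l: linorderneg}); and $V(p) = \{\Omega \in S \mid p \in \Omega\}$. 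One must check $S$ is closed under $\cap$: intersections of two $\Delta^{n(*)}$'s are in $S$ by construction; intersections involving a double-intersection element collapse to a single-intersection element or a $\Delta^{n(*)}$ via Lemma \ref{l: linorderneg} (the whole family $\{\Delta^{n(*)} : n \geq 1\}$ is a chain, so any finite intersection is its smallest member); and $i, e$ behave as required since $\Delta \neq i$, $\Delta \neq e$ (because $\alpha \in \Delta \neq \emptyset$ and $\beta \notin \Delta$), with the same argument giving $\Delta^{n(*)} \neq i, e$ using that $\NOT^n\gamma$ membership tracks across the chain. Condition (c) of a Routley proto-IF for $e$, i.e. $e = t\circ u$ implies $e = t$ or $e = u$, holds because $e$ is strictly below everything else in $S$. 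Conditions (a)–(c) of Proposition \ref{p:charoffilters} hold by Lemmas \ref{l: linorderneg}, \ref{l:starofintstars} and the definition of $*$: antitonicity of $*$ follows from $\Omega \subseteq \Gamma$ implying $\Gamma^* \subseteq \Omega^*$ (immediate from the definition of $^*$), and condition (c) of that proposition, $(t\circ u)^* \leq t^*$ or $(t\circ u)^* \leq u^*$, is exactly the content of Lemma \ref{l:starofintstars}.

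The heart of the argument is then the truth lemma: for every $\Omega \in S$ and every $L$-formula $\gamma$, $\Omega \Vdash \gamma$ iff $\gamma \in \Omega$, proved by induction on $\gamma$. The atomic and conjunction cases are routine. For disjunction, the nontrivial direction is $\gamma_1 \OR \gamma_2 \in \Omega \Rightarrow \Omega \Vdash \gamma_1 \OR \gamma_2$: one writes $\Omega$ as an intersection of the theories it is built from and shows $\gamma_1 \OR \gamma_2 \in \Omega$ forces, via negative primeness and the structure of $S$, that $\Omega$ is reachable as $t \circ u \leq \Omega$ with $t \ni \gamma_1$, $u \ni \gamma_2$; here the subtlety is that negative primeness only directly decomposes disjunctions of \emph{negations}, so one must treat the cases where $\gamma_1$ or $\gamma_2$ is or is not equivalent to a negation, falling back on the fact that each $\Delta^{2k+1(*)}$ is the $*$-image of some state and hence its theory \emph{is} prime in the ordinary sense (since $\Omega^* $ being a set of the form $\{\delta : \NOT\delta \notin \Omega'\}$ is prime outright), while $\Delta^{2k(*)}$ and the intersection states need the negative-prime machinery together with the chain property. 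For negation, $\Omega \Vdash \NOT\gamma$ iff $\Omega^* \nVdash \gamma$ iff (by IH, as $\Omega^* \in S$) $\gamma \notin \Omega^*$ iff $\NOT\gamma \in \Omega$ by definition of $^*$ — here one needs $^{**}$ to behave correctly on $i, e$ and the verification that the definition of $\Omega^*$ on the chain is consistent with $\Omega^{*} = \{\gamma : \NOT\gamma \notin \Omega\}$, which holds because $\Delta^{n(*)}$ is literally defined by iterating that operation, and for intersection states $(\Omega_1 \cap \Omega_2)^* = \{\gamma : \NOT\gamma \notin \Omega_1 \cap \Omega_2\}$ can be checked directly using (DM1), (DM2$^*$) and negative primeness. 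I expect the main obstacle to be precisely the disjunction case of the truth lemma — managing the interaction between ordinary primeness (which holds on the odd-star states automatically) and negative primeness (which is all we have on the even-star and intersection states), and confirming that the restricted state space $S$ is rich enough to witness every disjunction that belongs to one of its members. Once the truth lemma is in hand, $\Delta \Vdash \alpha$ and $\Delta \nVdash \beta$ follow from $\alpha \in \Delta$, $\beta \notin \Delta$, so $\alpha \vdash \beta$ fails in $\mathcal{M}$, completing the proof.
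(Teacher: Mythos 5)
Your overall strategy is the paper's: build a canonical model from a negative prime $\mathsf{LRIF}$-theory $\Delta$, its star-iterates and intersections, and prove a truth lemma. But the construction as you state it has a real structural gap. Your state set $S=\{i,e\}\cup\{\Delta^{n(*)}\mid n\geq 1\}\cup\{\Delta^{n(*)}\cap\Delta^{m(*)}\mid n,m\geq 1\}$ does not contain $\Delta$ itself, yet your final step needs $\Delta\Vdash\alpha$ and $\Delta\nVdash\beta$ at a state of the model; and if you add $\Delta$, the set is no longer closed under $\circ$: Lemma \ref{l: linorderneg} linearly orders only the iterates $\Delta^{n(*)}$ with $n\geq 1$, while $\Delta$ itself need not be comparable with any of them, so $\Delta\cap\Delta^{n(*)}$ is in general a genuinely new state, not ``the smallest member of a chain'' as you assert. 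Those states $\Delta\cap\Delta^{n(*)}$ are precisely what Lemmas \ref{l: nptclint} and \ref{l:starofintstars} are for, and they are what the paper's canonical model contains ($S=\{i,e\}\cup\{\Delta^{m(*)}\cap\Delta^{n(*)}\mid m,n\in\mathbb{N}\}$, indices from $0$): the resulting structure is a ladder of two chains (Fig.~\ref{f: can model}), not a chain, and this non-linearity is the whole point of the starred axioms and of negative primeness. The intersections you do include, with $n,m\geq 1$, all collapse by Lemma \ref{l: linorderneg} and add nothing, so you are applying the key lemmas to the wrong states.

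Two further problems. Your plan for the disjunction case of the truth lemma leans on the claim that each $\Delta^{2k+1(*)}$, being of the form $\{\delta\mid\NOT\delta\notin\Omega\}$, is prime outright; that is false in $\mathsf{LRIF}$, since splitting $\gamma_1\OR\gamma_2\in\Omega^*$ amounts to $\NOT\gamma_1\AND\NOT\gamma_2\vdash\NOT(\gamma_1\OR\gamma_2)$, i.e.\ (DM2), which is not an $\mathsf{LRIF}$-theorem and indeed fails in Routley IFs (the right frame of Fig.~\ref{f: counterexamples1}); all you have there is negative primeness (Lemma \ref{l:nptclunderst}), so that part of the induction must be argued differently. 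Finally, your soundness check of (L1$^*$) has local slips: $s\Vdash\NOT^{2k+1}\beta$ unfolds to $s^{2k+1(*)}\nVdash\beta$, not $s^{2k+2(*)}\nVdash\beta$, and since propositions are upward closed, non-support propagates \emph{downward} along $\leq$; so the correct split via Proposition \ref{p:linearityofstar}(b) is: if $s^{2k+1(*)}\leq s^*$ then $s^{2k+1(*)}\nVdash\alpha$, giving $s\Vdash\NOT^{2k+1}\alpha$, and if $s^*\leq s^{2k+1(*)}$ then $s^*\nVdash\beta$, giving $s\Vdash\NOT\beta$ --- you have both pairings reversed (the idea of using the linearly ordered $*$-image together with Proposition \ref{p:disjunctionofnegations} is, however, the right one, and the paper leaves this verification implicit).
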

\begin{proof}
We will prove completeness, i.e. the left-to-right implication. Assume that $\alpha \vdash \beta \notin \mathsf{LRIF}$. By Lemma \ref{l:negprimeext} there is a negative prime $\mathsf{LRIF}$-theory $\Delta$ such that $\alpha \in \Delta$ and $\beta \notin \Delta$. Relative to  $\alpha \vdash \beta$ we construct  a Routley IM  $\mathcal{M}=\left\langle S, \circ, *,  i, e, V \right\rangle$ where $i$ is the set of all $L$-formulas; $e$ is the empty set; $S=\{i, e\} \cup \{\Delta^{m(*)} \cap \Delta^{n(*)} \mid m, n \in \mathbb{N} \}$; the star operation is defined as usual; $\circ$ is defined as intersection; and $V(p)=\{\Omega \in S \mid p \in \Omega \}$. The structure of the model is illustrated with Fig. \ref{f: can model} (for the case where $\Delta^* \subseteq \Delta^{**}$).

Note that  $S$ is closed under intersection and $*$. In particular, it is closed under intersection because of Lemmas \ref{l: linorderneg} and \ref{l: nptclint} and it is closed under $*$ because of Lemmas \ref{l:nptclunderst} and \ref{l:starofintstars}. Moreover, Lemma  \ref{l:starofintstars} guarantees that $\mathcal{M}$ is a Routley IM. Again, it can be shown by straightforward induction that $\Omega \Vdash \alpha$\ in $\mathcal{M}$ iff $\alpha \in \Omega$. It follows that $\Delta \Vdash \alpha$ but $\Delta \nVdash \beta$, so $\alpha \vdash \beta$ is not valid in all IFs.
\end{proof}

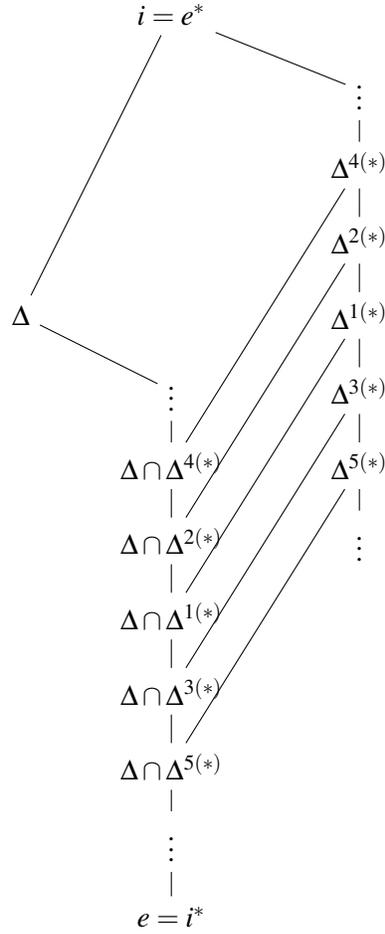
\begin{figure}
\begin{center}
\begin{tabular}{c}
\begin{tikzpicture}

\node (a) at (-2,0) {$\Delta$};
\node (b) at (2.5,0) {$\Delta^{1(*)}$};
\node (c) at (2.5,1) {$\Delta^{2(*)}$};
\node (d) at (2.5,-1) {$\Delta^{3(*)}$};
\node (e) at (2.5,2) {$\Delta^{4(*)}$};
\node (f) at (2.5,-2) {$\Delta^{5(*)}$};
\node (g) at (0,-4) {$\Delta \cap \Delta^{1(*)}$};
\node (h) at (0,-3) {$\Delta \cap \Delta^{2(*)}$};
\node (i) at (0,-5) {$\Delta \cap \Delta^{3(*)}$};
\node (j) at (0,-2) {$\Delta \cap \Delta^{4(*)}$};
\node (k) at (0,-6) {$\Delta \cap \Delta^{5(*)}$};
\node (l) at (2.5,3) {$\vdots$};
\node (m) at (2.5,-3) {$\vdots$};
\node (n) at (0,-1) {$\vdots$};
\node (o) at (0,-7) {$\vdots$};
\node (p) at (0,-8) {$e=i^*$};
\node (q) at (0,4) {$i=e^*$};

\draw[-] (b) edge (c);
\draw[-] (c) edge (e);
\draw[-] (b) edge (d);
\draw[-] (d) edge (f);
\draw[-] (e) edge (l);
\draw[-] (f) edge (m);

\draw[-] (a) edge (n);
\draw[-] (n) edge (j);
\draw[-] (j) edge (h);
\draw[-] (h) edge (g);
\draw[-] (g) edge (i);
\draw[-] (i) edge (k);
\draw[-] (k) edge (o);

\draw[-] (b) edge (g);
\draw[-] (c) edge (h);
\draw[-] (d) edge (i);
\draw[-] (e) edge (j);
\draw[-] (f) edge (k);

\draw[-] (p) edge (o);

\draw[-] (q) edge (a);
\draw[-] (q) edge (l);

\end{tikzpicture}
\end{tabular}
\end{center}
\caption{The structure of a Routley IM that provides a counterexample to an $\mathsf{LRIF}$-unprovable $\alpha \vdash \beta$}\label{f: can model}
\end{figure}

\section{Conclusion}

The goal of this paper was to explore the behaviour of Routley star in the context of ``information-based semantics'' in which states are not in general prime. We succeeded in characterizing syntactically the logic of all information frames equipped with Routley star. We also showed that if double negation laws are added, the resulting logic coincides  with Kalman logic that is known to be the $\{\AND, \OR, \NOT\}$-fragment of $\mathsf{R}$-mingle. We observed that it can be characterized as the logic of involutive  linearly ordered information frames. We also characterized the logic of all linear frames without the requirement of involution.

\bibliographystyle{eptcs}
\bibliography{biblio}

\end{document}